\crefname{prop}{Proposition}{Propositions}
\crefname{ineq}{inequality}{inequalities}
\def\email#1{{\tt#1}}
\newtheorem{theorem}{Theorem}
\newtheorem{proposition}{Proposition}
\newtheorem{definition}{Definition}
\newtheorem{remark}{Remark}
\crefname{THM}{Theorem}{Theorems}
\newcommand{\N}{\mathbb{N}}
\newcommand{\Z}{\mathbb{Z}}
\DeclareMathOperator*{\argmin}{\arg\!\min}
\DeclareMathOperator*{\argmax}{\arg\!\max}
\algnewcommand{\LineComment}[1]{\State \(\triangleright\) #1}
\begin{document}
\title{On the Approximability of Time Disjoint Walks}
\author{
Alexandre Bayen\thanks{University of California, Berkeley. \email{\{bayen,evinitsky\}@berkeley.edu}.}
\and Jesse Goodman\thanks{Cornell University. \email{jpmgoodman@cs.cornell.edu}.}
\and Eugene Vinitsky\footnotemark[1]}
\maketitle

\begin{abstract}
We introduce the combinatorial optimization problem Time Disjoint Walks (TDW), which has applications in collision-free routing of discrete objects (e.g., autonomous vehicles) over a network. This problem takes as input a digraph \(G\) with positive integer arc lengths, and \(k\) pairs of vertices that each represent a \emph{trip demand} from a source to a destination. The goal is to find a walk and delay for each demand so that no two trips occupy the same vertex \emph{at the same time}, and so that a min-max or min-sum objective over the trip durations is realized.

We focus here on the min-sum variant of Time Disjoint Walks, although most of our results carry over to the min-max case. We restrict our study to various subclasses of DAGs, and observe that there is a sharp complexity boundary between Time Disjoint Walks on oriented stars and on oriented stars with the central vertex replaced by a path. In particular, we present a poly-time algorithm for min-sum and min-max TDW on the former, but show that min-sum TDW on the latter is NP-hard.

Our main hardness result is that for DAGs with max degree \(\Delta\leq3\), min-sum Time Disjoint Walks is APX-hard. We present a natural approximation algorithm for the same class, and provide a tight analysis. In particular, we prove that it achieves an approximation ratio of \(\Theta(k/\log k)\) on bounded-degree DAGs, and \(\Theta(k)\) on DAGs and bounded-degree digraphs.
\end{abstract}


\section{Introduction}
\subsection{Related work}
Disjoint Paths is a classic problem in combinatorial optimization that asks: given an undirected graph \(G\), and \(k\) pairs of vertices, do there exist vertex-disjoint paths that connect each pair? This problem captures the general notion of \emph{connection without interference}, and has subsequently received much attention due to its applicability in areas like VLSI design~\cite{VLSI2,VLSI1} and communication networks~\cite{communication2,communication1}.

These applications have motivated many variants of this basic problem. For example, one may choose the underlying graph to be undirected or directed, and the disjointness constraint to be over vertices or edges. As an optimization problem, one may consider the \emph{maximum number of pairs} that can be connected with disjoint paths, the \emph{minimum number of rounds} necessary to connect all pairs (where all paths in a round must be disjoint)~\cite{kleinberg}, or the \emph{shortest set of disjoint paths} to connect all pairs (if all pairs can, in fact, be disjointly connected)~\cite{kobayashi}.

A few flavors of Disjoint Paths are tractable: for example, if \(k\) is fixed or \(G\) has bounded tree-width, then there exists a poly-time algorithm for finding vertex-disjoint paths on undirected graphs~\cite{treewidth,seymour}. Many interesting variants of Disjoint Paths are, however, extremely difficult. Indeed, finding vertex-disjoint paths on undirected graphs is one of Karp's NP-complete problems~\cite{karp}. Furthermore, nearly-tight hardness results are known for finding the maximum set of edge-disjoint paths in a directed graph with \(m\) edges: there exists an \(O(\sqrt{m})\)-approximation algorithm~\cite{kleinberg}, and it is NP-hard to approximate within a factor of \(m^{1/2-\epsilon}\), for any \(\epsilon>0\)~\cite{tight-hardness}. For detailed surveys on the complexity landscape of Disjoint Paths variants, see~\cite{kleinberg,kobayashi}.

\subsection{Contributions}
Despite the great variety of Disjoint Paths problems that have been considered in the literature, it appears that little attention has been given to variants that relax the \emph{disjointness} constraint, even though many natural applications do not always require paths to be completely disjoint. Consider, for example, the application of safely routing a collection of fully autonomous (and obedient) vehicles through an otherwise empty road network. In such a situation, we can certainly prevent collisions by routing all vehicles on disjoint paths. However, it is not difficult to see that if we have full control over the vehicles, using disjoint paths is rarely necessary (and, in fact, can be highly suboptimal).

Applications of this flavor motivate a new variant of Disjoint Paths, which roughly asks: given a graph \(G\) and \(k\) pairs of vertices that each represent a \emph{trip demand}, how should we assign a delay and a path to each trip so that (1) trips are completed as quickly as possible, and (2) no two trips \emph{collide} (i.e., occupy the same location at the same time). While there are problems in the literature (that do not wield the name ``Disjoint Paths'') that seemingly come close to capturing this goal, they exhibit some key differences. In particular, multicommodity flows over time~\cite{MFOT1,MFOT2} and job shop scheduling~\cite{JSS1} seem, at first glance, very related to our problem. However, the former does not support ``atomic'' commodities (i.e., whose demand-satisfying flow must be unsplittable over the network and time), and the latter does not capture the flexibility of scheduling job operations over any appropriate walk in a network.

As such, we are motivated to formalize and study this new variant of Disjoint Paths that relaxes the classical disjointness constraint to a ``time disjointness'' constraint. In particular, our contributions are fourfold:
\begin{itemize}
\item We introduce a natural variant of Disjoint Paths, which we call \emph{Time Disjoint Walks} (TDW). To the best of our knowledge, this is the first simple model that captures the notion of collision-free routing of \emph{discrete objects} over a shared network.

\item We give an intuitive sense for precisely \emph{when} Time Disjoint Walks becomes hard, by showing that it can be exactly solved (under min-sum and min-max objectives) in poly-time on oriented stars, but is NP-hard (under the min-sum objective) on a slightly more expressive class.

\item We prove that min-sum Time Disjoint Walks is APX-hard, by providing an L-reduction from a variant of SAT. In fact, our reduction shows that this result holds even for directed acyclic graphs (DAGs) of max degree three (\(\Delta\leq 3\)).

\item We describe an intuitive approximation algorithm for min-sum Time Disjoint Walks, and provide a tight analysis: we show that it achieves an approximation ratio of \(\Theta(k/\log k)\) on bounded-degree DAGs, and \(\Theta(k)\) on DAGs and bounded-degree digraphs.

\end{itemize}

We formally introduce Time Disjoint Walks in \cref{section:TDW}. In \cref{section:boundary}, we present an exact poly-time algorithm for TDW on oriented stars, and show that min-sum TDW is NP-hard on a slightly more complex class. In \cref{section:approx-prelims} we provide some useful definitions regarding approximation. In \cref{section:hardness} we prove our APX-hardness result. In \cref{section:approx-alg} we describe our approximation algorithm, and provide bounds on its performance. In \cref{section:conclusions} we state our conclusions and present some open problems. A preliminary version of this paper appeared in~\cite{goodman}.

\section{Time Disjoint Walks}\label{section:TDW}
\begin{figure}[t]
\centering
\begin{tikzpicture}[line width=1pt]
\tikzset{vertex/.style = {shape=circle,draw,minimum size=0.5em, inner sep=0pt}}
\tikzset{edge/.style = {->,> = latex'}}
\node[vertex,label=left:\(s_1\)] (1c1) at (0,0) {};
\node[vertex] (1c2) at (1,0) {};
\node[vertex] (1c3) at (2,0) {};
\node[vertex] (1c4) at (3,0) {};
\node[vertex] (1c5) at (4,0) {};
\node[vertex] (1c6) at (5,0) {};
\node[vertex] (1c7) at (6,0) {};
\node[vertex] (1c8) at (7,0) {};
\node[vertex,label=right:\(t_1\)] (1c9) at (8,0) {};

\node[vertex,label=above:\(s_2\)] (1t2) at (1,1) {};
\node[vertex,label=below:\(t_2\)] (1b2) at (1,-1) {};
\node[vertex,label=above:\(s_3\)] (1t3) at (2,1) {};
\node[vertex,label=below:\(t_3\)] (1b3) at (2,-1) {};
\node[vertex,label=above:\(s_4\)] (1t4) at (3,1) {};
\node[vertex,label=below:\(t_4\)] (1b4) at (3,-1) {};

\node[vertex] (1cyc) at (4,-1) {};

\node[vertex,label=above:\(s_5\)] (1t6) at (5,1) {};
\node[vertex,label=below:\(t_5\)] (1b6) at (5,-1) {};
\node[vertex,label=above:\(s_6\)] (1t7) at (6,1) {};
\node[vertex,label=below:\(t_6\)] (1b7) at (6,-1) {};
\node[vertex,label=above:\(s_7\)] (1t8) at (7,1) {};
\node[vertex,label=below:\(t_7\)] (1b8) at (7,-1) {};

\draw[edge] (1c1) to (1c2);
\draw[edge] (1c2) to (1c3);
\draw[edge] (1c3) to (1c4);
\draw[edge] (1c4) to (1c5);
\draw[edge] (1c5) to (1c6);
\draw[edge] (1c6) to (1c7);
\draw[edge] (1c7) to (1c8);
\draw[edge] (1c8) to (1c9);

\draw[edge] (1t2) to node[right,midway] {\(2\)} (1c2);
\draw[edge] (1t3) to node[right,midway] {\(3\)} (1c3);
\draw[edge] (1t4) to node[right,midway] {\(5\)} (1c4);
\draw[edge] (1c2) to (1b2);
\draw[edge] (1c3) to (1b3);
\draw[edge] (1c4) to (1b4);

\draw[edge] (1t6) to node[left,midway] {\(5\)} (1c6);
\draw[edge] (1t7) to node[left,midway] {\(6\)} (1c7);
\draw[edge] (1t8) to node[left,midway] {\(7\)} (1c8);

\draw[edge] (1c6) to (1b6);
\draw[edge] (1c7) to (1b7);
\draw[edge] (1c8) to (1b8);

\draw[edge,bend left=45] (1c5) to (1cyc);
\draw[edge,bend left=45] (1cyc) to (1c5);
\draw[edge,bend left=90] (1c1) to node[above,midway] {\(11\)} (1c9);

\node[] at (4,-2) {(i)};

\node[vertex,label=above:\(s_1\)] (2tl) at (9,1) {};
\node[vertex,label=below:\(s_2\)] (2bl) at (9,-1) {};
\node[vertex] (2c) at (10,0) {};
\node[vertex,label=above:\(t_2\)] (2tr) at (11,1) {};
\node[vertex,label=below:\(t_1\)] (2br) at (11,-1) {};

\draw[edge] (2tl) to (2c);
\draw[edge] (2bl) to node[below,midway] {\(2\)} (2c);
\draw[edge] (2c) to (2br);
\draw[edge] (2c) to (2tr);

\node[] at (10,-2) {(ii)};
\end{tikzpicture}
\caption{(Unlabeled arcs have length 1): (i) A TDW instance with an optimal solution that contains cycles and intersecting walks, even though disjoint paths exist. (ii) A TDW instance with an obvious optimal solution, or a Shortest Disjoint Paths instance with no solution.}
\label{fig:tdw-cycles}
\end{figure}
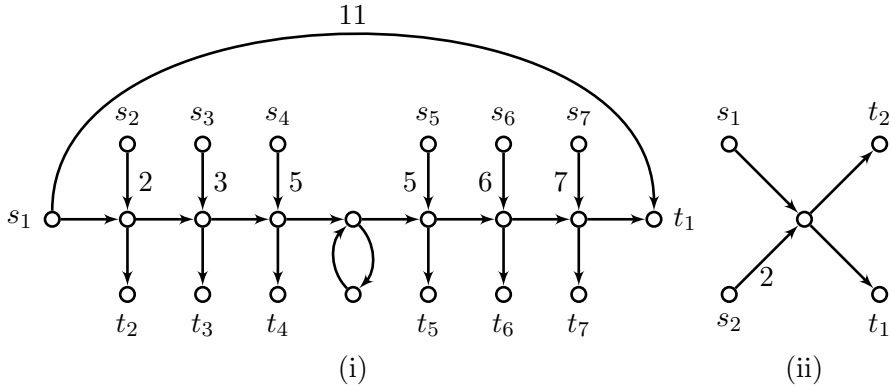
We must first mention a few preliminaries: given \(a, b\in\Z\), define \([a,b]:=\{x\in\Z\mid a\leq x \leq b\}\), and for \(b\in\Z\), we write \([b]:=[1,b]\). Note that for \(b<1\), \([b]=\emptyset\). We assume directed graphs (digraphs) have no parallel arcs in the same direction, and for a digraph \(G\), we let \(\Delta\) denote its max \emph{total} degree (indegree \emph{and} outdegree contribute to a vertex's total degree). Given a digraph \(G:=(V,E)\), and \(u,v\in V\), we define a \emph{walk} \(W\) from \(u\) to \(v\) in \(G\) as a tuple \((w_1,w_2,\dots,w_l)\) of vertices such that \(w_1=u, w_l=v\), and \((w_i,w_{i+1})\in E\) for each \(i \in [l-1]\). Note that a vertex may be repeated.
Given a digraph \(G\) with arc lengths \(\lambda:E\rightarrow\Z_{\geq1}\), and a walk \(W=(w_1,w_2,\dots,w_l)\) in \(G\), we let \(|W|:= l\) denote the \emph{cardinality} of the walk, and we define for every \(j\in[l]\) the \emph{length of the walk up to its \(j^{th}\) vertex} as
\begin{align*}
\lambda(W,j) := \sum_{i\in[j-1]}\lambda(w_i,w_{i+1}).
\end{align*}
For convenience, we let \(\lambda(W) := \lambda(W,l)\) denote the total length of the walk, and if \(W\) is a \emph{path} (i.e., no repeated vertices), we use \(\lambda(W,j)\) and \(\lambda(W,w_j)\) interchangeably. 

Finally, given delays \(d_1,d_2\in\Z_{\geq0}\) and walks \(W_1,W_2\) in \(G\), we say that \((d_1,W_1)\) and \((d_2,W_2)\) are \emph{time disjoint} if, intuitively, a small object traversing \(W_1\) at constant speed after waiting \(d_1\) units of time does not collide/interfere with a small object traversing \(W_2\) at the same speed after waiting \(d_2\) units of time. We consider walks that have not departed, and walks that have already ended, to no longer exist on the network (and thereby not occupy any vertices). Formally, we have: for every \(j_1\in[|W_1|],j_2\in[|W_2|]\) such that the \(j_1^\text{th}\) vertex of \(W_1\) is equal to the \(j_2^\text{th}\) vertex of \(W_2\),
\begin{align*}
d_1 + \lambda(W_1,j_1)\neq d_2+\lambda(W_2,j_2).
\end{align*}

We are now ready to formally define the problem examined in this paper:

\begin{definition}[Time Disjoint Walks (TDW)] Let \(G:=(V,E)\) be a digraph, let \(\lambda:E\rightarrow\mathbb{Z}_{\geq1}\) define arc lengths, and let \(\mathcal{T}:=\{(s_1,t_1),(s_2,t_2),\dots,(s_k,t_k)\}\subseteq V^2\) define a set of demands across unique vertices. For each \(i\in[k]\), find a delay \(d_i\in\mathbb{Z}_{\geq0}\) and walk \(W_i\) from \(s_i\) to \(t_i\) such that the tuples in \(\{(d_i,W_i)\mid i\in[k]\}\) are pairwise time disjoint, and such that \(\sum_{i\in[k]}(d_i+\lambda(W_i))\) or \(\max_{i\in[k]}(d_i+\lambda(W_i))\) is minimized.\footnote{We assume \(G\) contains, for each \((s_i,t_i)\), at least one walk connecting the two vertices.} We refer to the former as \emph{min-sum Time Disjoint Walks}, and the latter as \emph{min-max Time Disjoint Walks}.
\end{definition}

We note that one can construct analogous problems by considering undirected graphs as input, edge lengths and delays that are real-valued, or a definition of \emph{time disjoint} that requires large gaps between arrival times at common vertices (whereas the definition above simply requires a nonzero gap). We leave these variants to future work, noting that our primary goal in this paper is to study a basic flavor of this new combinatorial problem.

Furthermore, our selection of this variant is well-motivated by our original application of routing a collection of identical autonomous vehicles over an empty road network (which, for the sake of this futuristic application, we may assume was built specifically for these vehicles). In particular, we may (1) model the road network as a directed graph, (2) assume that all routed vehicles traverse their walk at the same constant velocity, (3) measure road lengths as the time necessary to traverse it at that velocity, and (4) assume that road lengths are integer multiples of the time length of each vehicle. Additionally, we may motivate our focus on the \emph{min-sum} objective by the desire to find a socially optimal solution.

Finally, we emphasize the novelty of our \emph{time disjoint} constraint by comparing it to the standard \emph{disjoint} constraint used in classical variants of Disjoint Paths. In particular, observe that if we modify the definition of Time Disjoint Walks to use the latter constraint instead of the former, we arrive at the Shortest Disjoint Paths problem~\cite{kobayashi}. However, this constraint makes all the difference: given an instance of Time Disjoint Walks, it is often the case that a solution under the standard \emph{disjoint} constraint is suboptimal if examined under the \emph{time disjoint} constraint. Indeed, the optimal solution under the latter constraint may even include paths that repeat vertices - hence the name Time Disjoint \emph{Walks}; see \cref{fig:tdw-cycles}, part (i). On the other hand, it is easy to construct an instance of Shortest Disjoint Paths that admits an obvious optimal solution under the time disjoint constraint, but does not yield any solution at all under the classical disjoint constraint; see \cref{fig:tdw-cycles}, part (ii).

These observations strongly suggest that there is no simple reduction, in either direction, between Time Disjoint Walks and Disjoint Paths. Furthermore, using time-expanded networks~\cite{MFOT1} to reduce Time Disjoint Walks into Disjoint Paths appears to offer little hope: such reductions will approximately square the size of the original graph, and many variants of Disjoint Paths are hard to approximate within \(m^{1/2-\epsilon}\), for any \(\epsilon>0\)~\cite{tight-hardness}. Thus, an approximation algorithm for Disjoint Paths, applied to a transformed Time Disjoint Walks instance, would likely fail to perform better than a trivial approximation algorithm for Time Disjoint Walks. These observations highlight the novelty of our problem and results.

\section{A complexity boundary}\label{section:boundary}
In order to get an idea of the difficulty of Time Disjoint Walks, we show that min-sum TDW exhibits a sharp complexity boundary, in the sense that it is poly-time solvable on one subclass of DAGs, but becomes NP-hard if we make this class slightly more expressive. In particular, we consider the following two subclasses of DAGs:
\begin{definition}[Oriented stars] A digraph \(G\) is an \emph{oriented star} if it is composed of a central vertex with any number of leaves attached by directed edges.
\end{definition}
Equivalently, an oriented star on \(n\) vertices is simply the complete bipartite graph \(K_{1,n-1}\) with some orientation applied to the edges.
\begin{definition}[Runways]
A digraph \(G\) is a \emph{runway} if it is composed of a central directed path with any number of leaves attached (anywhere along the path) by directed edges.
\end{definition}
One may think of a runway as an oriented star with the central vertex replaced by a directed path, or as a caterpillar graph with an orientation applied to the edges so as to produce a central directed path.

\subsection{Time Disjoint Walks on oriented stars in polynomial time}
We present poly-time \cref{algs:star} for Time Disjoint Walks on oriented stars, and show that it in fact optimally solves both min-sum and min-max TDW. The main idea in this algorithm and its analysis is that each trip \(i\in[k]\) is allocated a unique \emph{time slot} during which it is ``allowed'' to utilize the central vertex of the oriented star (i.e., the only vertex it will share with other paths). We often toggle between reasoning about the delay assigned to a trip and the time slot (on the central vertex) assigned to a trip - note that these two notions are essentially equivalent, and that it only makes sense to assign a time slot to a trip \(i\) if the time slot has value at least equal to the length of the first segment of trip \(i\) (i.e., from the source to the central vertex).

It may help to visualize this algorithm as taking a histogram over \(\Z_{\geq0}\) of arrival times of each trip at the central vertex (under 0 delay), and drawing an arrow from each element (trip) of this histogram \emph{forward} (i.e., to a number that is no smaller) to a distinct \emph{slot} (element of \(\Z_{\geq0}\)) during which that trip will pass through the central vertex. The length of this arrow is the delay assigned to that trip.

%
\begin{algorithm}[h]
\caption{Central vertex slot allocation, with priority to longer paths.}\label{algs:star}
\begin{algorithmic}[1]
\Require\tabto{50pt}\((G:=(V,E),\lambda:E\rightarrow\mathbb{Z}_{\geq1},\mathcal{T}:=\{(s_1,t_1),\dots,(s_k,t_k)\})\)
\Ensure\tabto{50pt}\(\{(d_1,W_1),\ldots,(d_k,W_k)\}\)
\LineComment{\underline{Initialize variables}}
\State \(\mathsf{central\_vertex}\gets\mathsf{central\_vertex}(G)\)
\For{\(i\in[k]\)}
\State \(W_i\gets(s_i,\mathsf{central\_vertex},t_i)\)
\State \(a_i\gets\lambda(s_i,\mathsf{central\_vertex})\)
\State \(b_i\gets\lambda(\mathsf{central\_vertex},t_i)\)
\EndFor
\State \(\mathsf{unassigned\_trips}\gets[k]\)
\State \(\mathsf{occupied\_slots}\gets\emptyset\)
\LineComment{\underline{Schedule trips by time slots on the central vertex, with priority to longer paths}}
\While{\(\mathsf{unassigned\_trips}\neq\emptyset\)}
\State \(\mathsf{min\_useful\_slot}\gets\min_{i\in\mathsf{unassigned\_trips}}(a_i)\)
\State \(\mathsf{slot}\gets\min(\{s\in\Z_{\geq0}\mid s\geq\mathsf{min\_useful\_slot}, s\notin\mathsf{occupied\_slots}\})\)
\State \(i^\ast\gets \argmax_{\{i\in\mathsf{unassigned\_trips}\mid a_i\leq \mathsf{slot}\}}(b_i)\)
\State \(d_{i^\ast}\gets\mathsf{slot}-a_{i^\ast}\)
\State \(\mathsf{unassigned\_trips}\gets\mathsf{unassigned\_trips}\setminus\{i^\ast\}\)
\State \(\mathsf{occupied\_slots}\gets\mathsf{occupied\_slots}\cup\{\mathsf{slot}\}\)
\EndWhile
\State \textbf{return} \(\{(d_1,W_1),\ldots,(d_k,W_k)\}\)
\end{algorithmic}
\end{algorithm}
Note that for brevity, we have excluded from \cref{algs:star} the pseudocode to check that \(G\) is indeed an oriented star with central vertex \(\mathsf{central\_vertex}(G)\), and the pseudocode to handle the case where some \(s_i\) or \(t_i\) is the central vertex (for such an \(i\), \(W_i\) should instead be set to \((s_i,t_i)\), and one of \(a_i,b_i\) should be set to 0, according to whether \(s_i\) or \(t_i\) is the central vertex). Also, we allow ties encountered when setting \(i^\ast\) to be broken arbitrarily.\footnote{In our analyses that follow, we also assume all ties are broken arbitrarily.} It is clear that the algorithm runs in polynomial time, and that its output is always feasible: \(i^\ast\) is always defined (due to \(\mathsf{min\_useful\_slot}\)), all delays are nonnegative (by definition of \(i^\ast\)), and collisions - which can only occur at the central vertex - are avoided (due to \(\mathsf{occupied\_slots}\)). We now prove that \cref{algs:star} optimally solves both min-sum and min-max TDW on oriented stars.

\begin{theorem}\label{thm:ms-oriented-stars}
\cref{algs:star} is an exact poly-time algorithm for min-sum Time Disjoint Walks on oriented stars.
\end{theorem}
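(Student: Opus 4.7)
My plan is to reduce min-sum TDW on oriented stars to an integer slot-assignment problem at the central vertex, pin down the optimum of that problem via an exchange argument, and check that Algorithm~\ref{algs:star} attains it (polynomial running time is already clear from the pseudocode, as the paper notes).

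The reduction is immediate. Let $c$ be the central vertex, and set $a_i := \lambda(s_i, c)$, $b_i := \lambda(c, t_i)$. Since $G$ is an oriented star, the only walk from $s_i$ to $t_i$ is $(s_i, c, t_i)$ of length $a_i + b_i$, and the only vertex two trips can share is $c$. Introducing the ``slot'' variable $\sigma_i := d_i + a_i$ (the arrival time of trip $i$ at $c$), feasibility becomes ``$\sigma_i \geq a_i$ and the $\sigma_i$ are pairwise distinct nonnegative integers,'' while the objective rewrites as $\sum_i \sigma_i + \sum_i b_i$. Since $\sum_i b_i$ is a constant determined by the instance, min-sum TDW reduces to minimizing $\sum_i \sigma_i$ over distinct $\sigma_i \in \Z_{\geq 0}$ with $\sigma_i \geq a_i$.

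Next I would characterize the optimum of this scheduling problem. Relabel so $a_1 \leq a_2 \leq \cdots \leq a_k$, and define the ``greedy slots'' $\sigma^g_1 := a_1$ and $\sigma^g_t := \max(a_t, \sigma^g_{t-1}+1)$ for $t \geq 2$. A standard exchange argument (whenever $a_i < a_j$ but $\sigma^*_i > \sigma^*_j$, swap the two slots; this preserves feasibility and the objective) lets us assume any optimal solution has $\sigma^*_1 < \sigma^*_2 < \cdots < \sigma^*_k$. An induction on $t$ then yields $\sigma^*_t \geq \sigma^g_t$: the base case is $\sigma^*_1 \geq a_1$, and the step uses $\sigma^*_t \geq a_t$ together with $\sigma^*_t \geq \sigma^*_{t-1}+1 \geq \sigma^g_{t-1}+1$. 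So the optimum objective value is exactly $\sum_t \sigma^g_t + \sum_i b_i$.

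Finally I would show that Algorithm~\ref{algs:star} achieves this bound. Let $\tau_t$ be the slot chosen at iteration $t$ and $m_t$ the value of $\mathsf{min\_useful\_slot}$ at that iteration. Removing a trip from $\mathsf{unassigned\_trips}$ cannot decrease the minimum of the remaining $a_i$, so $m_t$ is non-decreasing in $t$; combined with the selection rule $\tau_t = \min\{s \geq m_t : s \notin \{\tau_1,\ldots,\tau_{t-1}\}\}$, this forces $\tau_1 < \tau_2 < \cdots < \tau_k$ (any smaller candidate would have been chosen at an earlier iteration). After $t-1$ iterations at most $t-1$ of the $t$ smallest-$a$ trips can have been assigned, so some trip among the $t$ smallest is still unassigned, giving $m_t \leq a_t$. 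Because $\max(m_t, \tau_{t-1}+1)$ is itself an integer $\geq m_t$ and (using strict monotonicity of $\tau$) not in $\{\tau_1,\ldots,\tau_{t-1}\}$, we obtain $\tau_t \leq \max(m_t, \tau_{t-1}+1)$, and induction on $t$ then yields $\tau_t \leq \max(a_t, \sigma^g_{t-1}+1) = \sigma^g_t$. Summing matches the lower bound, so the algorithm is optimal. The main subtlety to confirm is that the tie-breaking rule ``largest $b_{i^\ast}$'' plays no role in this min-sum argument: the inductive chain uses only $m_t \leq a_t$, which holds for \emph{any} valid choice of $i^\ast$; the $b_i$ priority will be essential only when the authors extend the result to min-max TDW.
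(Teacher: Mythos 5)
Your proposal is correct, and it shares the paper's essential reduction — in an oriented star the walks are forced, only the central vertex can be shared, and the cost decomposes as $\sum_i \sigma_i + \sum_i b_i$ over distinct integer slots $\sigma_i \geq a_i$ — but it proves optimality of the slot assignment by a different route. The paper never writes down what the optimal slot set looks like; it lets $S$ and $S^\ast$ be the slot sets of the algorithm and of an arbitrary optimum, observes that equal slot sets imply equal cost, and derives a contradiction from the smallest element $\beta$ of the symmetric difference (if $\beta \in S \setminus S^\ast$ the optimum can be improved by moving some trip down to $\beta$; if $\beta \in S^\ast \setminus S$ the algorithm's greedy $\min$ rule is violated). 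You instead exhibit the optimum explicitly via the sorted greedy recurrence $\sigma^g_t = \max(a_t, \sigma^g_{t-1}+1)$, prove it is a termwise lower bound on any feasible schedule by exchange-plus-induction, and then sandwich the algorithm's slots under it using $m_t \leq a_t$ and the strict monotonicity of the chosen slots. Your version buys an explicit closed form for $OPT$ and makes it transparent that the $b$-priority tie-breaking is irrelevant to the min-sum claim (which the paper leaves implicit); the paper's version buys the slightly stronger structural fact $S = S^\ast$ without needing to sort or name the recurrence. Both arguments are complete; the steps you flag as needing care (monotonicity of $\mathsf{min\_useful\_slot}$, the counting argument for $m_t \leq a_t$, feasibility of the swap in the exchange step) all check out.
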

\begin{proof}
Let \((G,\lambda,\mathcal{T})\) be a TDW instance, where \(G\) is an oriented star with central vertex \(x\). Let \((d_1,W_1),\ldots,(d_k,W_k)\) be the solution outputted by \cref{algs:star}, and let \((d_1^\ast,W_1^\ast),\ldots,(d_k^\ast,W_k^\ast)\) be an optimal solution under the min-sum objective. For each \(i\in[k]\), define \(a_i=a_i^\ast=\lambda(W_i,x)\), \(b_i=b_i^\ast=\lambda(W_i)-\lambda(W_i,x)\), and \(y_i:= d_i+a_i, y_i^\ast := d_i^\ast + a_i^\ast\). Let \(S:=\{y_i\mid i\in[k]\}\) denote the time slots during which the central vertex is busy under our algorithm's solution, and let \(S^\ast:=\{y_i^\ast\mid i\in[k]\}\) denote the same notion under a min-sum optimal solution. By the time-disjointness constraint of TDW, we know that \(|S|=|S^\ast|=k\). Recall that the min-sum objective sums delays and walk lengths over all trips, and that a unique walk satisfies each demand in an oriented star. Using these, we know that our solution will have the same min-sum cost as the optimal min-sum cost if \(S=S^\ast\).

Assume for contradiction that \(S\neq S^\ast\). Let \(\beta\) be the smallest element in the symmetric difference of \(S\) and \(S^\ast\). If \(\beta\in S\setminus S^\ast\), then define \(T:=\{i\in[k]\mid y_i\leq\beta\}\) to contain the trips that our algorithm schedules on a slot no later than \(\beta\). Now, note that there must exist some \(i\in T\) such that \(y_i^\ast>\beta\), since \(\beta\) is the smallest element at which \(S,S^\ast\) differ, and \(i\neq j\in T\implies y_i^\ast\neq y_j^\ast\), by the time-disjointness constraint of TDW. But this means that we can reset \(d_i^\ast\) to \(\beta-a_i^\ast=\beta-a_i\) (which is guaranteed to be nonnegative, by our algorithm description) and obtain a valid solution to this TDW instance that costs less than the optimal, which is a contradiction.

If \(\beta\in S^\ast\setminus S\), then by almost identical reasoning as above, we know that if we define \(T^\ast := \{i\in[k]\mid y_i^\ast\leq \beta\}\), then there must exist some \(i\in T^\ast\) such that \(y_i>\beta\) and \(\beta-a_i\) is nonnegative. But this would mean that during the iteration of the main while loop at which \(i\) gets assigned a delay, \(\beta\geq\mathsf{min\_useful\_slot}\) and \(\beta\notin\mathsf{occupied\_slots}\) and \(\beta<y_i\), yet \(\mathsf{slot}\) is set to \(y_i\), which contradicts the definition of the min function (in our algorithm description). This completes the proof.
\end{proof}

\begin{theorem}
\cref{algs:star} is an exact poly-time algorithm for min-max Time Disjoint Walks on oriented stars.
\end{theorem}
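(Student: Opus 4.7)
The plan is to adapt the inductive exchange argument from Theorem~1, viewing \cref{algs:star} as a Jackson-style greedy rule for scheduling unit-time ``central vertex slots'' with release times $a_i$ and tails $b_i$ under the objective $\max_i(y_i+b_i)$. Keeping the notation from Theorem~1's proof (so the algorithm's output has $y_i=d_i+a_i$, the algorithm's slot set is $S=\{y_i\}$, and similarly $y_i^*, S^*$ for a fixed min-max optimal solution $\tau$), I would first observe that the slots produced by the algorithm are strictly increasing $s_1<s_2<\cdots<s_k$ (routine induction on iteration count: $\min_{\text{unassigned}}a_i$ is monotone non-decreasing, and by the definition of $\mathsf{slot}$ in the previous iteration, no free slot lies between the previous $\min$ and the previous $\mathsf{slot}$, so none can lie below the new $\mathsf{slot}$ either). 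Let $i^*_j$ denote the trip the algorithm assigns to $s_j$.

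The core of the proof is an exchange lemma that, without increasing the objective, forces $\tau(i^*_1)=s_1$. By definition $s_1=\min_i a_i$, the iteration-$1$ pool $\{i:a_i\leq s_1\}$ equals $\{i:a_i=s_1\}$, and $i^*_1$ maximizes $b_i$ on that pool. Two cases. If $\tau$ leaves slot $s_1$ unused, then $\tau(i^*_1)\geq a_{i^*_1}=s_1$ gives $\tau(i^*_1)>s_1$, so reassigning $i^*_1$ to $s_1$ only lowers $y_{i^*_1}+b_{i^*_1}$ and cannot grow the max. If instead some $i_1\neq i^*_1$ occupies slot $s_1$ in $\tau$, then $a_{i_1}\leq s_1$ puts $i_1$ in the iteration-$1$ pool so $b_{i_1}\leq b_{i^*_1}$; swapping $i^*_1$ and $i_1$ between slots $s_1$ and $\tau(i^*_1)$ preserves feasibility (since $a_{i_1}\leq s_1<\tau(i^*_1)$) and changes only two contributions of the objective, to $s_1+b_{i^*_1}$ and $\tau(i^*_1)+b_{i_1}$. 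Both of these are at most $\tau(i^*_1)+b_{i^*_1}$, a term already present in the old objective, so the max does not increase.

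Having aligned the first assignment, I would conclude by induction on $k$. Removing trip $i^*_1$ and marking slot $s_1$ as permanently occupied yields a sub-instance on which iterations $2,\ldots,k$ of \cref{algs:star} coincide with a fresh execution of the algorithm, so the inductive hypothesis yields that this tail of the algorithm's output is min-max optimal for the sub-instance; combined with the aligned first assignment, $\sigma$ is min-max optimal overall. The base case $k=1$ is trivial since $y_1\geq a_1$ forces the output's cost $a_1+b_1$ to be optimal.

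The main obstacle is the exchange step, particularly verifying the swap case: one must check both that feasibility is retained and that the two modified objective terms are dominated by a term already present in the old objective. This rearrangement-style inequality is the crux of the argument and precisely mirrors the classical optimality proof of Jackson's rule for $1\,|\,r_j,p_j=1\,|\,L_{\max}$.
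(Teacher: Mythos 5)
Your argument is sound, but it takes a genuinely different route from the paper's. The paper gives a direct lower-bound certificate: starting from the trip \(\gamma\) attaining the algorithm's maximum \(d_\gamma+a_\gamma+b_\gamma\), it walks backward through consecutive occupied slots to build a ``critical block'' \(Q=[q_{\text{start}},q_{\text{end}}]\subseteq S\) such that every trip the algorithm places in \(Q\) has its release time \(a_i\) in \(Q\) and tail \(b_i\geq b_\gamma\); pigeonhole then forces \emph{any} feasible solution to schedule one of these trips at a slot \(\geq q_{\text{end}}\), giving \(\mathrm{OPT}\geq q_{\text{end}}+b_\gamma\) with no modification of the optimal solution and no induction. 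Your proof is instead the classical exchange-plus-induction argument for Jackson's rule with unit jobs: align the optimum with the algorithm's first assignment via a swap whose two perturbed terms are dominated by an existing term, then recurse. Both are valid; the paper's yields an explicit witness interval in one pass, while yours is more modular but pays a price in the induction: the sub-instance carries a \emph{blocked} slot, so the induction hypothesis must be stated for the generalized scheduling problem (distinct slots \(y_i\geq a_i\) avoiding a forbidden set, minimizing \(\max_i(y_i+b_i)\)) rather than for oriented-star TDW itself, and the exchange lemma must be proved in that generality. There your specific justification needs a small repair: at deeper recursion levels the first chosen slot \(s_1\) can strictly exceed the minimum remaining release time (because smaller slots are blocked), so the claims ``the pool equals \(\{i: a_i=s_1\}\)'' and ``\(a_{i^*_1}=s_1\)'' can fail; one should instead argue \(\tau(i^*_1)\geq s_1\) from the minimality of \(s_1\) among unblocked slots at least \(\min_i a_i\) (any slot the optimum uses for \(i^*_1\) is unblocked and at least \(a_{i^*_1}\geq \min_i a_i\)), after which your two cases go through verbatim, including the base case restated for the blocked problem. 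With that routine adjustment your proof is complete.
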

\begin{proof}
First, define the same variables as in the proof to \cref{thm:ms-oriented-stars}, replacing the words ``min-sum'' with ``min-max'' as necessary. Our goal here will be to find a sequence of consecutive slots \(Q:=[q_\text{start},q_\text{end}]\subseteq\Z_{\geq0}\) with the following properties:
\begin{itemize}
\item \(Q\subseteq S\).
\item If we define \(I_Q:=\{i\in[k]\mid d_i+a_i\in Q\}\), then \(A_Q:=\{a_i \mid i\in I_Q\}\subseteq Q\).
\item If we define \(\gamma\in I_Q\) such that \(d_\gamma+a_\gamma=q_\text{end}\), then \(d_\gamma+a_\gamma+b_\gamma\geq d_j+a_j+b_j\), for all \(j\in[k]\).
\item For all \(j\in I_Q\), we have \(b_j\geq b_\gamma\).
\end{itemize}
Given such a \(Q\), the first two properties ensure \(|I_Q|=|Q|\) and that these \(|Q|\) trips (indexed by \(I_Q\)), under 0 delay, would arrive at the central vertex at some time in \(Q\). Thus, the min-max optimal solution (and, in general, any feasible solution) to the TDW instance must set, for some \(h\in I_Q\), \(d_h^\ast+a_h^\ast\geq q_\text{end}\) (otherwise, because delays are nonnegative, the pigeonhole principle shows that the optimal solution schedules two trips in \(I_Q\) to share a slot in \(Q\)). The third and fourth property ensure that for this same \(h\in I_Q\), \(b_h^\ast=b_h\geq b_\gamma\) and thus \(d_h^\ast+a_h^\ast+b_h^\ast \geq q_\text{end}+b_\gamma= d_\gamma+a_\gamma+b_\gamma\geq d_j+a_j+b_j\), for all \(j\in[k]\). This shows that the max trip time in any solution to the TDW instance is at least the max trip time on the solution produced by our algorithm, proving that our algorithm is optimal under the min-max constraint. We construct such a \(Q\), below.

First, define \(\gamma := \argmax_{i\in[k]}(d_i+a_i+b_i)\), and set \(q_\text{end}:=d_\gamma+a_\gamma\). Define \(\mathsf{anchor}:=\gamma, \mathsf{curr\_slot}:= q_\text{end}\), and repeat the following while \(a_\mathsf{anchor}\neq\mathsf{curr\_slot}\): decrement \(\mathsf{curr\_slot}\), let \(\alpha\in[k]\) be the index such that \(d_\alpha+a_\alpha=\mathsf{curr\_slot}\), and reset \(\mathsf{anchor}\) to \(\argmin_{i\in\{\mathsf{anchor},\alpha\}}(a_i)\). Finally, after quitting the loop, set \(q_\text{start}:=a_\mathsf{anchor}\), and set \(Q:=[q_\text{start},q_\text{end}]\).

We must argue that the above process is well-defined (in particular, that \(\alpha\) is always defined and that the process terminates) and that \(Q\) has all the properties we desire. To see the former, note that if we reach some \(\mathsf{curr\_slot}\) such that there is no \(\alpha\in[k]\) with \(d_\alpha+a_\alpha=\mathsf{curr\_slot}\), then the description of \cref{algs:star} is contradicted, because when this algorithm is picking a delay for trip \(\mathsf{anchor}\), the above analysis suggests that it picks a delay larger than \(\mathsf{curr\_slot}-a_\mathsf{anchor}\), even though \(\mathsf{curr\_slot}\geq \mathsf{min\_useful\_slot}\) and \(\mathsf{curr\_slot}\notin\mathsf{occupied\_slots}\). Thus, \(\alpha\) is always defined, and so each iteration of the loop is well defined. Because \(k\) is finite, the process must end.

Now, note that the well-definedness of \(\alpha\) and the stopping condition of the loop also shows that \(Q:=[q_\text{start},q_\text{end}]\subseteq S\). Next, \(A_Q\subseteq Q\) by the setting of \(\mathsf{anchor}\) and \(q_\text{start}\). The third desirable property of \(Q\) trivially holds by selection of \(\gamma\). All that remains is to show that for all \(j\in I_Q\), \(b_j\geq b_\gamma\). To see this, consider any \(j\in I_Q\setminus\{\gamma\}\) and the value of \(\mathsf{anchor}\) during the \(Q\)-creation process right after \(\alpha\) is set to \(j\) (i.e., before \(\mathsf{anchor}\) is potentially reset). Observe that \(d_j+a_j < d_\mathsf{anchor}+a_\mathsf{anchor}\) and that \(a_j,a_\mathsf{anchor}\leq d_j+a_j\) by the while loop condition. Thus, by description of \cref{algs:star}, we must have that \(b_\mathsf{anchor}\leq b_j\), for otherwise our algorithm would have preferred to set \(d_\mathsf{anchor}\) to \(d_j+a_j-a_\mathsf{anchor}\). Since \(\mathsf{anchor}\) is initialized to \(\gamma\), we have we have \(b_\gamma\leq b_j\), \(\forall j\in I_Q\).
\end{proof}

\subsection{Min-sum Time Disjoint Walks on runways is NP-hard}
We will reduce from min-sum coloring on interval graphs. We define this problem and state its NP-hardness,\footnote{Whenever we refer to an optimization problem as NP-hard, we mean the natural corresponding decision problem.} below.
\begin{definition}[Min-sum coloring]
Given an undirected graph \(G\), find a coloring \(c:V(G)\to\Z_{\geq1}\) such that adjacent vertices get different colors and \(\sum_{v\in V}c(v)\) is minimized.
\end{definition}
\begin{definition}[Interval graphs]
An undirected graph \(G:=(V,E)\) such that each \(v\in V\) may be identified with an interval \(S_v:=[v_\text{start},v_\text{end}]\subseteq\mathbb{R}\) such that \(xy\in E\iff S_x\cap S_y\neq\emptyset\).
\end{definition}
\begin{theorem}[\hspace{1sp}\cite{mscoloring1,mscoloring2}]Min-sum coloring of interval graphs is NP-hard, even for intervals taken over \(\Z\).
\end{theorem}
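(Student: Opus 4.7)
Since this statement is cited from prior work, the plan is to sketch the standard strategy one would use for such an NP-hardness argument. The key observation is that min-sum coloring on interval graphs has a clean scheduling interpretation: each interval represents a unit-length job with an availability window, the colors $1, 2, 3, \dots$ are consecutive unit time slots, overlapping intervals must receive different slots, and the coloring sum equals the total completion time. This makes it natural to reduce from a numerical/ordering NP-hard problem whose objective already has a sum-of-completion-times flavor, such as \textsc{3-Partition} or a scheduling problem like $1 \mid r_j \mid \sum C_j$ (which itself is NP-hard).

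The reduction would then proceed in the following steps. First, map each numerical item of the source problem to a cluster of intervals whose lengths and positions encode the item's value. Second, introduce ``backbone'' intervals that rigidly constrain the set of colors available to each item-cluster, so that choosing a bin for an item in the source problem corresponds to choosing a range of color indices for its cluster in the target. Third, compute a threshold $T$ such that a ``good'' partition/schedule in the source maps to a coloring with sum exactly $T$, and conversely any coloring of sum at most $T$ must, after a normalization argument that straightens out local permutations of colors, induce a valid low-cost solution in the source. The integer-endpoint requirement in the theorem is accommodated by scaling all coordinates and lengths to integers, which only multiplies the construction size by a polynomial factor.

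The main obstacle---and the reason this result is not immediate---is that interval graphs are perfect: the minimum number of colors needed equals the maximum clique size, which is polynomial-time computable. So hardness cannot come from \emph{how many} colors suffice, but only from \emph{which} colors are assigned to which vertices among the (exponentially many) optimal-length colorings. The reduction must therefore design gadgets that are essentially rigid at the level of the \emph{sum} cost while still being realizable as an interval graph. Getting this rigidity right---so that local swaps of colors strictly increase the cost, and so that cost differences in the source instance translate to measurable cost differences in the target---is the delicate technical core of the proof, and is what distinguishes sum coloring from ordinary coloring on this graph class.
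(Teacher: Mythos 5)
This theorem is not proved in the paper at all: it is imported verbatim from the cited references, so there is no in-paper argument to compare yours against. Judged on its own terms, your proposal is a reduction \emph{template} rather than a proof, and it has a concrete gap at its core: the entire construction ("clusters of intervals encoding item values," "backbone intervals," "a threshold $T$," "a normalization argument that straightens out local permutations") is asserted but never instantiated, and the correctness direction that matters --- that \emph{every} coloring of sum at most $T$ decodes to a good source solution --- is exactly the part you defer to an unspecified rigidity argument. Since you yourself identify that rigidity as "the delicate technical core," leaving it unconstructed means the theorem is not established.

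There is also a conceptual slip in your scheduling interpretation that would likely break any gadget built on it. In min-sum coloring of an interval graph, the color assigned to a vertex is an arbitrary positive integer; it is \emph{not} required to lie inside that vertex's interval. The interval structure only determines the conflict graph (which pairs must differ), so your reading of an interval as an "availability window" whose slots are the admissible colors is not the problem being reduced to --- a short interval sitting far to the right on the line may perfectly well receive color $1$. Any encoding in which "choosing a bin for an item corresponds to choosing a range of color indices for its cluster" via the positions of backbone intervals must therefore enforce that range purely through overlap constraints and cost pressure, not through interval placement. Relatedly, one of your proposed source problems, $1 \mid r_j \mid \sum C_j$ restricted to unit jobs, is polynomial-time solvable, so the reduction would have to start from a genuinely hard problem (the published proofs use carefully designed interval gadgets for exactly this reason). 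As it stands, the proposal describes what a proof would have to accomplish without accomplishing it.
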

We are now ready to prove our result.
\begin{theorem}
Min-sum Time Disjoint Walks on runways is NP-hard.
\end{theorem}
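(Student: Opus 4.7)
The plan is to reduce min-sum coloring on interval graphs with integer endpoints directly to min-sum TDW on runways, by turning each interval into a trip whose walk traverses a contiguous portion of a shared central path, and aligning the source-leaf lengths so that the trip's delay $d_v$ plays exactly the role of a color.

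Given an interval graph $G = (V, E)$ with intervals $\{[v_\text{start}, v_\text{end}] \subseteq \mathbb{Z} : v \in V\}$ (WLOG endpoints in $[1, 2|V|]$), I would build a runway as follows. Create one central-path vertex $q_a$ for each distinct endpoint value $a$, ordered naturally, with arc length $a' - a$ between consecutive central-path vertices $q_a, q_{a'}$. For each $v \in V$, attach a source leaf $s_v$ to $q_{v_\text{start}}$ via an arc of length $L + v_\text{start}$ (with $L$ chosen so all such lengths are positive integers; after shifting so the minimum endpoint is $1$, $L := 1$ works), and attach a destination leaf $t_v$ from $q_{v_\text{end}}$ via an arc of length $1$. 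The demands are $(s_v, t_v)$ for $v \in V$; all sources and destinations are distinct leaves, so this is a valid runway TDW instance of polynomial size.

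The critical computation is that under delay $d_v$, trip $v$ arrives at the central-path vertex $q_a$ (for $v_\text{start} \le a \le v_\text{end}$) at time $d_v + (L + v_\text{start}) + (a - v_\text{start}) = d_v + L + a$, which is independent of $v_\text{start}$. Hence a collision at $q_a$ between trips $u$ and $v$ occurs iff $d_u = d_v$, provided $a \in [u_\text{start}, u_\text{end}] \cap [v_\text{start}, v_\text{end}]$. Since two integer intervals overlap iff the point $\max(u_\text{start}, v_\text{start})$---an endpoint, and therefore a central-path vertex---lies in both, the time-disjointness constraint reduces exactly to $d_u \ne d_v$ for every $uv \in E(G)$. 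All source and destination leaves are unique per trip and contribute no further collisions. Moreover, $\lambda(W_v) = (L + v_\text{start}) + (v_\text{end} - v_\text{start}) + 1 = L + v_\text{end} + 1$ depends only on $v$, so the TDW objective equals $\sum_v d_v + C$, where $C$ is a constant depending only on the input.

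The correspondence with min-sum coloring is then immediate: any proper coloring $c : V \to \mathbb{Z}_{\ge 1}$ yields a valid TDW schedule via $d_v := c_v - 1$ of cost $\sum_v c_v - |V| + C$, and any valid TDW schedule yields a proper coloring via $c_v := d_v + 1$ of cost $\sum_v d_v + |V|$. The two objectives differ by a fixed additive constant, so an optimal TDW schedule corresponds to an optimal coloring (in either direction), proving NP-hardness. The step that requires the most care is the choice of source-leaf length $L + v_\text{start}$: this is precisely what cancels the positional dependence in the arrival-time equation and reduces time-disjointness to the clean constraint $d_u \ne d_v$. Without such an offset one instead gets $d_u - u_\text{start} \ne d_v - v_\text{start}$, which spoils the bijection between nonnegative delays and positive colors; getting this alignment right is the only nonroutine part of the argument.
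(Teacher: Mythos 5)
Your proposal is correct and follows essentially the same route as the paper: a reduction from min-sum coloring of interval graphs to TDW on a runway whose central path is indexed by the interval endpoints, with source-leaf arc lengths chosen so that arrival time at any shared central vertex is (delay $+$ position), making time-disjointness equivalent to $d_u \neq d_v$ for adjacent vertices and delays correspond to colors shifted by one. The only differences are cosmetic: the paper re-indexes endpoints to $1,\dots,|P|$ with unit central arcs and calibrates the destination-leaf arcs so all walks have the same length $|P|+1$, whereas you keep the actual gaps and absorb the varying walk lengths into an instance-dependent additive constant.
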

\begin{proof}
Given an interval graph \(G\) and a parameter \(q\) (for which we ask whether there is a coloring of \(G\) whose sum adds up to at most \(q\)), we efficiently construct a TDW instance \((G^\prime,\lambda,\mathcal{T})\) and parameter \(q^\prime\) such that \(G\) admits a solution of (sum-)cost at most \(q\) if and only if \((G^\prime,\lambda,\mathcal{T})\) admits a solution of (sum-)cost at most \(q^\prime\).

Given an interval graph \(G\) and parameter \(q\), let \(\{S_v:=[v_\text{start},v_\text{end}]\subseteq\Z\mid v \in V\}\) denote its interval representation, let \(P=\bigcup_{v\in V}\{v_\text{start},v_\text{end}\}\) denote all the interval endpoints, and construct a bijection \(\pi:P\to[|P|]\) that orders the endpoints such that \(p_1\leq p_2\in P\implies \pi(p_1)\leq \pi(p_2)\). Clearly, we may equivalently represent \(G\) as the intervals \(\{S^\prime_v:=[\pi(v_\text{start}),\pi(v_\text{end})]\subseteq\Z_{\geq1}\mid v\in V\}\).

Now, construct a runway digraph \(G^\prime\) with arc lengths \(\lambda:E(G^\prime)\to\Z_{\geq1}\) as follows: add to \(V(G^\prime)\) the vertices \(x_1,x_2,\ldots,x_{|P|}\); turn these vertices into a directed path from \(x_1\) to \(x_{|P|}\) by adding the appropriate arcs to \(E(G^\prime)\); and let \(\lambda\) assign a length of 1 to each such arc. Then, for each \(v\in V(G)\): add to \(V(G^\prime)\) two distinct nodes \(v_L,v_R\); add to \(E(G^\prime)\) two distinct arcs \((v_L,x_{\pi(v_\text{start})})\) and \((x_{\pi(v_\text{end})},v_R)\); and set \(\lambda(v_L,x_{\pi(v_\text{start})}):=\pi(v_\text{start})\) and \(\lambda(x_{\pi(v_\text{end})},v_R):=|P|+1-\pi(v_\text{end})\). Finally, define the set of demands \(\mathcal{T}\) as \(\{(v_L,v_R)\mid v\in V(G)\}\), and let \(q^\prime:=|V(G)|\cdot|P|+q\). This completes the reduction.

To prove the correctness of our reduction, we first make several key observations. By construction, we have that for any \(i\in[|P|]\), \(W_v\) includes the vertex \(x_i\) iff \(i\in S_v^\prime\). Furthermore, for any \(i\in[|P|]\) such that \(x_i\) is hit by \(W_v\), we have \(\lambda(W_v,x_i)=i\). Thus, we see that two walks \(W_u,W_v\) share a vertex \(x_i\) iff \(S_u^\prime\cap S_v^\prime\neq\emptyset\) iff \(uv\in E(G)\), and that for any such shared vertex \(x\), \(\lambda(W_u,x)=\lambda(W_v,x)\). Finally, note that for any \(v\in V(G)\), \(\lambda(W_v)=|P|+1\).

Consider the case where \(G\) has a solution of sum-cost at most \(q\). Let \(c:V(G)\to\Z_{\geq1}\) be a proper coloring that achieves this sum. We can construct a solution \(\{(d_v,W_v)\mid v\in V(G)\}\) to TDW instance \((G^\prime,\lambda,\mathcal{T})\), where \((d_v,W_v)\) are the delay and walk for demand \((v_L,v_R)\), as follows: set \(d_v=c(v)-1\), and let \(W_v\) be the unique walk from \(v_L\) to \(v_R\) in \(G^\prime\). By our key observations, we have \(\sum_{v\in V(G)}(d_v+\lambda(W_v))=\sum_{v\in V(G)}(c(v)-1)+|V(G)|\cdot(|P|+1)\leq q+|V(G)|\cdot|P|=q^\prime\), and we know that if two walks \(W_u,W_v\) share a vertex \(x\), then \(c(u)\neq c(v)\) (by the validity of \(c\)), and so \(d_u\neq d_v\). Thus, \(d_u+\lambda(W_u,x)\neq d_v+\lambda(W_v,x)\), and so this solution to the TDW instance is valid.

Consider the case where \((G^\prime,\lambda,\mathcal{T})\) has a solution of sum-cost at most \(q^\prime\). Let \(\{(d_v,W_v)\mid v\in V(G)\}\) be such a solution. We construct a solution \(c:V(G)\to\Z_{\geq1}\) to min-sum coloring instance \(G\) by setting \(c(v):=d_v+1\). By our key observations, we know that \(uv\in V(G)\implies W_u,W_v\) share some vertex \(x\) where \(\lambda(W_u,x)=\lambda(W_v,x)\); thus, by the validity of our TDW solution, we know that \(d_u\neq d_v\), which means \(c(u)\neq c(v)\). Thus, \(c\) is a valid coloring of \(V(G)\), and \(\sum_{v\in V(G)}c(v)=\sum_{v\in V(G)}(d_v+1)\leq q^\prime - (|P|+1)\cdot|V(G)| +|V(G)|=q\). Thus, our reduction is correct, and it can clearly be done in poly time.
\end{proof}

\section{Approximation preliminaries}\label{section:approx-prelims}

Given an optimization problem \(\mathcal{P}\), we let \(I_\mathcal{P}\) denote the instances of \(\mathcal{P}\), \(SOL_\mathcal{P}\) map each \(x\in I_{\mathcal{P}}\) to a set of feasible solutions, and let \(c_\mathcal{P}\) assign a real cost to each pair \((x,y)\) where \(x\in I_\mathcal{P}\) and \(y\in SOL_\mathcal{P}(x)\). For \(x\in I_\mathcal{P}\), we let \(OPT_\mathcal{P}(x) := \min_{y^\ast\in SOL_{\mathcal{P}}(x)}c_\mathcal{P}(x,y^\ast)\) if \(\mathcal{P}\) is a minimization problem, and \(OPT_\mathcal{P}(x) := \max_{y^\ast\in SOL_{\mathcal{P}}(x)}c_\mathcal{P}(x,y^\ast)\) otherwise.

If \(\mathcal{A}\) is a polynomial time algorithm with input \(x\in I_\mathcal{P}\) and output \(y\in SOL_\mathcal{P}(x)\), we say that \(\mathcal{A}\) is a \emph{\(\rho\)-approximation algorithm}, or \emph{has approximation ratio} \(\rho\), if \(\mathcal{P}\) is a minimization problem and \(c_\mathcal{P}(x,\mathcal{A}(x))/OPT_\mathcal{P}(x)\leq \rho\), or \(\mathcal{P}\) is a maximization problem and \(OPT_\mathcal{P}(x)/c_\mathcal{P}(x,\mathcal{A}(x))\leq \rho\), for all \(x\in I_\mathcal{P}\). Note that \(\rho\geq1\).

The class \emph{APX} contains all optimization problems that admit a \(\rho\)-approximation algorithm, for \emph{some} constant \(\rho>1\). An optimization problem is said to be \emph{APX-hard} if every problem in APX can be reduced to it through an approximation-preserving reduction. One reduction of this type is the \emph{L-reduction}:
\begin{definition}[L-reduction]
An L-reduction from an optimization problem \(\mathcal{P}\) to an optimization problem \(\mathcal{Q}\), denoted \(\mathcal{P}\leq_L\mathcal{Q}\), is a tuple \((f,g,\alpha,\beta)\), where:
\begin{itemize}
\item For each \(x\in I_\mathcal{P}\), \(f(x)\in I_\mathcal{Q}\) and can be computed in polynomial time.
\item For each \(y\in SOL_\mathcal{Q}(f(x))\), \(g(x,y)\in SOL_\mathcal{P}(x)\) and can be computed in polynomial time.
\item \(\alpha\) is a positive real constant such that for each \(x\in I_\mathcal{P}\),
\[
OPT_\mathcal{Q}(f(x))\leq\alpha\cdot OPT_\mathcal{P}(x).
\]
\item \(\beta\) is a positive real constant such that for each \(x\in I_\mathcal{P},y\in SOL_\mathcal{Q}(f(x))\),
\[
\big|OPT_\mathcal{P}(x)-c_\mathcal{P}(x,g(x,y))\big|\leq \beta\cdot\big|OPT_\mathcal{Q}(f(x))-c_\mathcal{Q}(f(x),y)\big|.
\]
\end{itemize}
\end{definition}
If a problem is APX-hard, it is NP-hard to \(\rho\)-approximate for some constant \(\rho>1\); thus, showing APX-hardness is strictly stronger than showing NP-hardness. To show APX-hardness, one can simply L-reduce from a known APX-hard problem. We refer the reader to~\cite{approx-textbook} for a good reference on approximation.

\section{Hardness of approximation}\label{section:hardness}
To show the hardness of min-sum TDW on bounded-degree DAGs, we show an L-reduction from MAX-E2SAT(3), which is known to be APX-hard~\cite{maxsat}. We remind the reader of the definition, below, and then proceed with our proof.

\begin{definition}[MAX-E2SAT(3)] Let \(\phi\) be a CNF formula in which (i) each clause contains exactly two literals on distinct variables, and (ii) each variable appears in at most three clauses. Find a truth assignment to the variables in \(\phi\) that maximizes the number of satisfied clauses.
\end{definition}
\begin{theorem}
Min-sum Time Disjoint Walks is APX-hard, even for DAGs with \(\Delta\leq3\).
\end{theorem}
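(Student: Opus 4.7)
The plan is to L-reduce from MAX-E2SAT(3). Given a formula $\phi$ with variables $x_1,\ldots,x_n$ and clauses $C_1,\ldots,C_m$, I construct $f(\phi)=(G',\lambda,\mathcal{T})$ as follows. For each variable $x_i$, introduce a diamond variable gadget with source $s_i$, branching vertex $b_i$ (outdegree 2), two parallel subwalks of equal length (a \emph{T-subwalk} and an \emph{F-subwalk}), merging vertex $m_i$ (indegree 2), and target $t_i$. For each clause $C_j$, introduce a clause gadget consisting of a single arc $c_j^{\mathrm{in}}\to c_j^{\mathrm{out}}$. Route the T-subwalk of $x_i$ through the clause arc of $C_j$ whenever $x_i$ occurs in $C_j$ as the negative literal $\neg x_i$ (so that $x_i=\text{true}$ makes this literal false), and route the F-subwalk through the clause arc whenever $x_i$ occurs positively. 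Since each clause arc is used by exactly two subwalks, its endpoints have total degree $3$; every other vertex has degree $\le 3$ by construction.

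To ensure the two literal walks of each clause arrive at $c_j^{\mathrm{in}}$ at the same time step (so that an unsatisfied clause produces a collision), apply Vizing's/Shannon's theorem to the multigraph $H$ whose vertices are variables and whose multi-edges are clauses (connecting the two variables of each clause); since $\Delta(H)\le 3$, a proper edge coloring of $H$ uses at most $4$ colors, yielding a time-slot assignment $\tau_j\in\{1,2,3,4\}$ such that the slots of the (at most three) clauses containing any given variable are all distinct. I then pad each variable gadget with private filler vertices so that at time $\tau$ (counting from $s_i$) each subwalk sits at $c_j^{\mathrm{in}}$ whenever $\tau=\tau_j$ for the relevant clause, and otherwise at a filler vertex unique to that subwalk. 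Both subwalks have uniform length $T=O(1)$, and $|V(G')|=O(n+m)$.

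For the forward direction, I would show that any truth assignment $\sigma$ yields a TDW solution of cost $\le nT+U_\sigma$, where $U_\sigma$ is the number of unsatisfied clauses: each variable trip takes its T- or F-subwalk per $\sigma$ with zero delay, and for each unsatisfied clause a single unit of delay on one of the two participating walks resolves the collision at $c_j^{\mathrm{in}}$. For the reverse direction, given a TDW solution $y$, let $g(\phi,y)=\sigma$ be read off from each variable trip's walk choice; the collision structure forces $c\ge nT+\mathsf{VC}_\sigma$, where $\mathsf{VC}_\sigma$ denotes the minimum vertex cover of the multigraph whose edges are the clauses unsatisfied under $\sigma$. Since $\mathsf{VC}_\sigma\ge U_\sigma/3$ (because $H$ has max degree $\le 3$), and since $n=\Theta(m)$, $T=O(1)$, and $OPT_{\text{MAX-SAT}}\ge m/2$, one obtains $\alpha,\beta=O(1)$.

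The main obstacle is verifying the \emph{no-cascade invariant} in the forward direction: that a single unit of delay on a walk does not create a new collision elsewhere. This is where the edge-coloring argument is critical, since it guarantees that the delayed walk's other clause-visit slots never line up with those of its partner walks, and where the privacy of filler vertices rules out collisions off the clause gadgets. A secondary subtlety is that a single delay can eliminate collisions in multiple clauses sharing that variable, which is why the sharp form of the L-reduction invokes the vertex-cover bound together with the inequality $U_\sigma\le 3\mathsf{VC}_\sigma$ (from $\Delta(H)\le 3$) to absorb the factor into the constant $\beta$.
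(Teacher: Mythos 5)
Your high-level strategy---an L-reduction from MAX-E2SAT(3) with a two-path variable gadget threaded through clause gadgets at synchronized arrival times---is the same as the paper's, but your reduction differs in one structural choice that turns out to be fatal: you create only $n$ demands (one per variable) and detect an unsatisfied clause as a \emph{collision between the two variable walks} at the clause arc. The paper instead adds a separate demand $(c_j^s,c_j^t)$ for every clause, routed through one of the two literal vertices: a satisfied clause leaves one literal path untouched by its variable walk, so the clause walk needs delay $0$, while an unsatisfied clause forces delay exactly $1$. That makes the excess cost over the baseline $7(n+m)$ \emph{exactly} the number of unsatisfied clauses, in both directions, which is what yields $\beta=1$.

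With your scheme the excess cost does not track $U_\sigma$ tightly enough, and the slack cannot be ``absorbed into $\beta$'' as you propose. One unit of delay on a single variable walk can simultaneously resolve up to three unsatisfied clauses, so the reverse direction only gives cost $\geq nT+U_\sigma/3$, while the forward direction gives roughly $nT+U_{\sigma}$ (and even that needs delays up to $3$ rather than $\{0,1\}$: if two adjacent vertices of the unsatisfied-clause multigraph both receive delay $1$ they still collide, so you need a proper coloring of that multigraph by delays, not a vertex cover). The $\beta$ condition compares \emph{differences from the optima}, and this factor-of-$3$ mismatch makes it fail outright. Concretely, take $\phi$ with $U^*=1$ and an assignment $\sigma$ with $U_\sigma=2$ whose two unsatisfied clauses share a variable $x$: routing every walk according to $\sigma$ and delaying only $x$'s walk by $1$ is feasible (no third clause containing $x$ is doubly occupied, and fillers are private), with cost $nT+1=OPT_{\mathcal{Q}}(f(\phi))$, yet $g$ reads off $\sigma$, which is suboptimal for MAX-SAT by $1$. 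Then $|OPT_{\mathcal{P}}(\phi)-c_{\mathcal{P}}(\phi,g(\phi,y))|=1$ while $|OPT_{\mathcal{Q}}(f(\phi))-c_{\mathcal{Q}}(f(\phi),y)|=0$, so no finite $\beta$ works. (A secondary issue: since $c_j^{\mathrm{out}}$ has outdegree $2$ into two different variables' subwalks, your graph admits ``crossing'' walks that switch gadgets at shared clauses; the paper's primed literal vertices and the sink $c_j^t$ rule these out.) Your synchronization machinery and $\alpha$ bound are fine; the reduction needs the per-clause demand to make the accounting exact.
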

\begin{proof}
We let \(\mathcal{P}:=\) MAX-E2SAT(3), \(\mathcal{Q}:=\) TDW with instances restricted to those containing DAGs with \(\Delta\leq3\), and show that \(\mathcal{P}\leq_L\mathcal{Q}\). Below, we describe our L-reduction \((f,g,\alpha,\beta)\).

\textbf{Description of \(f\)}: Given an instance \(\phi\in I_\mathcal{P}\) with \(n\) variables and \(m\) clauses, we let \(X:=\{x_1,\dots,x_n\}\) refer to its variables and \(\mathcal{C}:=\{C_1,\dots,C_m\}\) refer to its clauses. We let \(L:=\{x_1,\dots,x_n,\overline{x}_1,\dots,\overline{x}_n\}\) refer to its literals. For convenience, we define \(e:L\to X\) that extracts the variable from a given literal; i.e., \(e(x_i)=e(\overline{x}_i)=x_i\). We label the literals in clause \(C_j\) as \(l^1_j,l^2_j\). For each \(l\in L\), we let \(S_l := \{l^a_j\mid a\in[2],j\in[m],l^a_j=l\}\) capture all occurrences of literal \(l\) in \(\phi\). Finally, for each \(l\in L\), we define an arbitrary bijection \(\pi_l:S_l\to[|S_l|]\) to induce an ordering on \(S_l\). We will let \(\pi_l^{-1}\) denote its inverse: i.e., \(\pi_l^{-1}(1)\) is the first element in \(S_l\) in the order induced by \(\pi_l\).

We may now describe \(f\), which constructs an instance \((G,\lambda,\mathcal{T})\in I_\mathcal{Q}\) from \(\phi\). We start with the construction of \(G\) (see \cref{fig:gadgets}), which closely follows the standard proof of NP-hardness for Disjoint Paths: for each clause \(C_j=(l_j^1\lor l_j^2)\) in \(\phi\), we create a new \emph{clause gadget} and add it to \(G\). That is, for each clause \(C_j\), we add the following vertex and arc set to our construction:
\begin{align*}
V_{C_j} &:= \{c_j^s,l_j^1,l_j^2,l_j^{1^\prime},l_j^{2^\prime},c_j^t\}\\
E_{C_j} &:= \{(c_j^s,l_j^1),(c_j^s,l_j^2),(l_j^1,l_j^{1^\prime}),(l_j^2,l_j^{2^\prime}),(l_j^{1^\prime},c_j^t),(l_j^{2^\prime},c_j^t)\}
\end{align*}

Next, for each \(x_i\in X\), we add an \emph{interleaving variable gadget} as follows: first, we add two vertices \(x_i^s,x_i^t\) to \(V(G)\). Then, we wish to create exactly two directed paths (walks), \(W_{x_i}^+,W_{x_i}^-\), from \(x_i^s\) to \(x_i^t\): we want \(W_{x_i}^+\) to travel through all vertices corresponding to positive literals of \(x_i\), and \(W_{x_i}^-\) to travel through all vertices corresponding to negative literals of \(x_i\). Formally, for each \(l\in\{x_i,\overline{x}_i\}\), we create a path from \(x_i^s\) to \(x_i^t\) as follows. First, if \(|S_l|=0\), we add arc \((x_i^s,x_i^t)\) to \(E(G)\). Otherwise, we add arcs \((x_i^s,\pi_l^{-1}(1))\) and \(((\pi_l^{-1}(|S_l|))^\prime,x_i^t)\) to \(E(G)\), and then for each \(j\in[|S_l|-1]\), we add arc \(((\pi_l^{-1}(j))^\prime,\pi_l^{-1}(j+1))\). Note that the prime symbols are merely labels, and are used in our construction to ensure that the max degree of \(G\) remains at most three.
\begin{figure}[t]
\centering
\begin{tikzpicture}[line width=1pt]
\tikzset{vertex/.style = {shape=circle,draw,minimum size=0.5em, inner sep=0pt}}
\tikzset{edge/.style = {->,> = latex'}}

\node[vertex,label=left:\(x_i^s\)] (xis) at (1,0) {};

\node[vertex,label=above:\(c_{j_1}^s\)] (c1s) at (3.5,1.5) {};
\node[vertex,label={left,yshift=1em:\(l_{j_1}^{1}\)}] (c1l1) at (3,0.5) {};
\node[vertex,label={left,yshift=-1em:\(l_{j_1}^{1^\prime}\)}] (c1l1p) at (3,-0.5) {};
\node[vertex,label={right,yshift=1em:\(l_{j_1}^{2}\)}] (c1l2) at (4,0.5) {};
\node[vertex,label={right,yshift=-1em:\(l_{j_1}^{2^\prime}\)}] (c1l2p) at (4,-0.5) {};
\node[vertex,label=below:\(c_{j_1}^t\)] (c1t) at (3.5,-1.5) {};

\node[vertex,label=above:\(c_{j_2}^s\)] (c2s) at (6.5,1.5) {};
\node[vertex,label={left,yshift=1em:\(l_{j_2}^{1}\)}] (c2l1) at (6,0.5) {};
\node[vertex,label={left,yshift=-1em:\(l_{j_2}^{1^\prime}\)}] (c2l1p) at (6,-0.5) {};
\node[vertex,label={right,yshift=1em:\(l_{j_2}^{2}\)}] (c2l2) at (7,0.5) {};
\node[vertex,label={right,yshift=-1em:\(l_{j_2}^{2^\prime}\)}] (c2l2p) at (7,-0.5) {};
\node[vertex,label=below:\(c_{j_2}^t\)] (c2t) at (6.5,-1.5) {};

\node[vertex,label=above:\(c_{j_3}^s\)] (c3s) at (9.5,1.5) {};
\node[vertex,label={left,yshift=1em:\(l_{j_3}^{1}\)}] (c3l1) at (9,0.5) {};
\node[vertex,label={left,yshift=-1em:\(l_{j_3}^{1^\prime}\)}] (c3l1p) at (9,-0.5) {};
\node[vertex,label={right,yshift=1em:\(l_{j_3}^{2}\)}] (c3l2) at (10,0.5) {};
\node[vertex,label={right,yshift=-1em:\(l_{j_3}^{2^\prime}\)}] (c3l2p) at (10,-0.5) {};
\node[vertex,label=below:\(c_{j_3}^t\)] (c3t) at (9.5,-1.5) {};

\node[vertex,label=right:\(x_i^t\)] (xit) at (12,0) {};

\draw[edge] (c1s) to (c1l1);
\draw[edge] (c1s) to (c1l2);
\draw[edge] (c1l1) to (c1l1p);
\draw[edge,red] (c1l2) to (c1l2p);
\draw[edge] (c1l1p) to (c1t);
\draw[edge] (c1l2p) to (c1t);

\draw[edge] (c2s) to (c2l1);
\draw[edge] (c2s) to (c2l2);
\draw[edge] (c2l1) to (c2l1p);
\draw[edge,green] (c2l2) to (c2l2p);
\draw[edge] (c2l1p) to (c2t);
\draw[edge] (c2l2p) to (c2t);

\draw[edge] (c3s) to (c3l1);
\draw[edge] (c3s) to (c3l2);
\draw[edge,green] (c3l1) to (c3l1p);
\draw[edge] (c3l2) to (c3l2p);
\draw[edge] (c3l1p) to (c3t);
\draw[edge] (c3l2p) to (c3t);
\draw[edge,red,bend right=9] (xis) to (c1l2);
\draw[edge,red,bend left=12] (c1l2p) to (xit);
\draw[edge,green,bend right=9] (xis) to (c2l2);
\draw[edge,green,bend left=15] (c2l2p) to (c3l1);
\draw[edge,green,bend left=15] (c3l1p) to (xit);

\end{tikzpicture}
\caption{An interleaving variable gadget (and its affiliated clause gadgets) corresponding to a variable with one negative occurrence (red path) and two positive occurrences (green path).}
\label{fig:gadgets}
\end{figure}
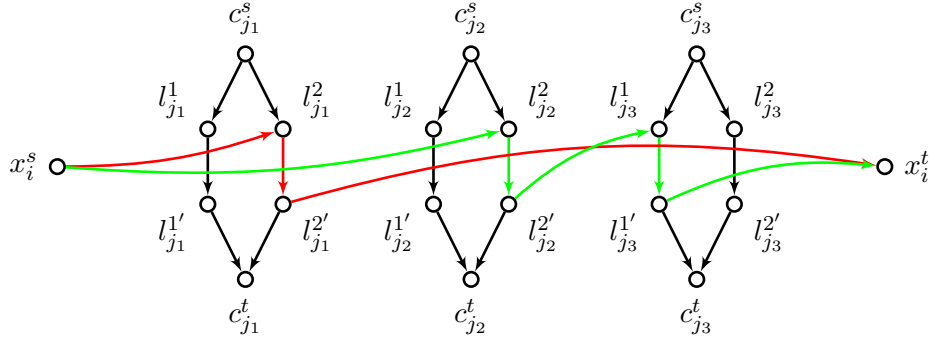
This completes our construction of \(G\). We now define a set of \(n+m\) demands, where each corresponds to a variable or a clause:
\[
\mathcal{T} := \{(x_i^s,x_i^t)\mid i\in[n]\}\cup\{(c_j^s,c_j^t)\mid j\in[m]\}
\]

Finally, we must define arc lengths \(\lambda:E\rightarrow\mathbb{Z}_{\geq1}\). We will do this in a way that for each \(j\in[m],a\in[2]\), we have \(\lambda(W_{c_j^s},l_j^a)=\lambda(W_{x^s_i},l_j^a)\), where \(W_{c_j^s}\) is the unique walk in \(G\) from \(c_j^s\) to \(l_j^a\), and \(W_{x_i^s}\) is the unique walk in \(G\) from \(x_i^s=e(l_j^a)^s\) to \(l_j^a\). Call this property (\(\ast\)). To facilitate our analysis, we will also want every demand-satisfying path in \(G\) to have the same length.

Since \(\phi\) is an instance of MAX-E2SAT(3), we know that for each \(i\in[n]\), each of the two paths between \(x_i^s\) and \(x_i^t\) passes through at most 3 clause gadgets. Thus, by our construction, each such path includes at most 7 arcs, and any path from a variable \(x_i^s\) to some literal \(l_j^a\) with \(x_i=e(l_j^a)\) can use at most 5 arcs. Thus, we can successfully force each demand-satisfying path in \(G\) to have length 7 while maintaining property (\(\ast\)) by defining \(\lambda:E\rightarrow\mathbb{Z}_{\geq1}\) as follows, completing our construction of \((G,\lambda,\mathcal{T})\in I_\mathcal{Q}\):
\begin{align*}
\lambda(u,v):=\begin{cases}
1,&\textbf{if }(u,v)=(l_j^a,l_j^{a^\prime}), j\in[m],a\in[2];\textbf{ or}\\
&\textbf{if }(u,v)=(x_i^s,l_j^a),i\in[n],j\in[m],a\in[2];\textbf{ or}\\
&\textbf{if }(u,v)=(l_h^{a^\prime},l_j^{b}),h,j\in[m], a,b\in[2];\\
7,&\textbf{if }(u,v)=(x_i^s,x_i^t),i\in[n]\\
7-2|S_l|,&\textbf{if }(u,v)=(l_j^{a^\prime},x_i^t),j\in[m],a\in[2],i\in[n],l_j^{a}=l\\
2\pi_l(l_j^a)-1,&\textbf{if }(u,v)=(c_j^s,l_j^a),j\in[m],a\in[2],l_j^a=l\\
7-1-\lambda(c_j^s,l_j^a),&\textbf{if }(u,v)=(l_j^{a^\prime},c_j^t),j\in[m],a\in[2]
\end{cases}
\end{align*}

\textbf{Description of \(g\)}: Given a solution \(y\in SOL_\mathcal{Q}(f(\phi))\), we construct a solution \(g(\phi,y)\in SOL_\mathcal{P}(\phi)\) through two consecutive transformations: \(z\), followed by \(q\). That is, we will define transformations \(z\) and \(q\) such that \(g\) is the composition \(g(\phi,y) := q(\phi,z(y))\).

We define \(z\) to transform solution \(y\) into another solution \(y^\prime\in SOL_\mathcal{Q}(f(\phi))\) such that \(c_\mathcal{Q}(f(\phi),y^\prime)\leq c_\mathcal{Q}(f(\phi),y)\) and such that \(y^\prime\) assigns 0 delay to demands associated with interleaving variable gadgets. To accomplish this, recall that \(y=\{(d_1,W_1),\dots,(d_{n+m},W_{n+m})\}\), by definition of \(SOL_\mathcal{Q}\). Without loss of generality, we may assume tuples indexed with \([n]\) correspond to demands on interleaving variable gadgets, and tuples indexed with \([n+m]\setminus[n]\) correspond to demands on clause gadgets.

Now, while there exists some \(i\in[n]\) such that \(d_i>0\) (and thus \(d_i\geq1\)), we perform the following modification on \(y\): first, we reset \(W_i\) to be the path traveling through at most one clause gadget - the positive or negative path \emph{must} have this property, because each variable appears in \(\phi\) at most three times, by definition of MAX-E2SAT(3). Now, reset \(d_i\) to 0. If \(W_i\) shares a vertex with another walk \(W_j\), we know \(j\in[n+m]\setminus[n]\), by construction of \(G\). In this case, reset \(d_j\) to 1 if and only if \(d_j\) is currently 0. By construction of \(\lambda\), the walks remain time disjoint and the cost of the solution does not increase.

In the second transformation, \(q\), we transform modified solution \(y^\prime\) into an assignment \((A:X\rightarrow\{T,F\})\in SOL_\mathcal{P}(\phi)\) as follows: for each \(i\in[n]\), set \(A(x_i)=T\) if and only if \(W_{x_i}\), the walk from \(x_i^s\) to \(x_i^t\), takes the negative literal path.

\textbf{Valid value for \(\alpha\)}: We will show that for \(\alpha=29\), \(OPT_\mathcal{Q}(f(\phi))\leq \alpha\cdot OPT_\mathcal{P}(\phi)\). To see this, we make two observations. First observation: if \(A:X\rightarrow\{T,F\}\) is a truth assignment for \(\phi\), then we can construct a solution to \(f(\phi)\) as follows: for each \(i\in[n]\), connect demand \((x_i^s,x_i^t)\) using the negative literal path if \(A(x_i)=T\), and the positive literal path if \(A(x_i)=F\). Either way, assign a delay of 0. Then, for each \(j\in[m]\) where clause \(C_j\) is satisfied by assignment \(A\), connect demand \((c_j^s,c_j^t)\) using a walk that goes through a literal that evaluates to true under \(A\). Assign a delay of 0 to this demand. For each clause \(C_j\) that isn't satisfied by \(A\), select an arbitrary walk to complete the corresponding demand \((c_j^s,c_j^t)\). Assign a delay of 1 to this demand. It is clear that this is a valid solution to \(f(\phi)\). Furthermore, the cost of our solution is \(7(n+m)+U(A,\phi)\), where \(U(A,\phi)\) is the number of clauses in \(\phi\) unsatisfied by \(A\). Second observation: by linearity of expectation, if \(\phi\) is an instance of MAX-E2SAT(3), then there must exist an assignment \(A:X(\phi)\rightarrow\{T,F\}\) that satisfies at least \(3/4\) of the clauses. 

We may now prove the desired inequality for \(\alpha=29\). From our first observation and the fact that \(n\leq 2m\) (since each of the \(m\) clauses has 2 literals),
\begin{align}\label[ineq]{eq:opt-ineq-lemma}
OPT_\mathcal{Q}(f(\phi)) \leq 7(n+m) + (m-OPT_\mathcal{P}(\phi))\leq 22m-OPT_\mathcal{P}(\phi).
\end{align}
Now, by our second observation, we know \(OPT_\mathcal{P}(\phi)\geq 3m/4\). Thus, we have:
\begin{align*}
OPT_\mathcal{Q}(f(\phi)) \leq 22\cdot(4/3)\cdot OPT_\mathcal{P}(\phi)-OPT_\mathcal{P}(\phi)\leq 29\cdot OPT_\mathcal{P}(\phi).
\end{align*}

\textbf{Valid value for \(\beta\)}: We will show that for \(\beta=1\) and any \(y\in SOL_\mathcal{Q}(f(\phi))\), \(\big(OPT_\mathcal{P}(\phi)-c_\mathcal{P}(\phi,g(\phi,y))\big)\leq\beta\cdot\big(c_\mathcal{Q}(f(\phi),y)-OPT_\mathcal{Q}(f(\phi))\big)\), as required. As a first step, we recall that transformations \(z,q\) define \(g\), and let \(\gamma\) denote the number of clause gadget demands assigned a delay of 0 by solution \(z(y)\) to \(f(\phi)\). We make the following crucial claim:
\begin{align}\label[ineq]{eq:crucial-claim}
c_\mathcal{P}(\phi,g(\phi,y)):=c_\mathcal{P}(\phi,q(\phi,z(y))) \geq \gamma.
\end{align}
To see this, note the following: by construction, \(z(y)\) is a valid solution to \(f(\phi)\). Thus, if \(z(y)\) assigns clause gadget demand \((c_j^s,c_j^t)\) a delay \(d_j=0\) and walk \(W_j\) that passes through literal \(l\), then \(l\) is a positive literal if and only if the walk selected for the interleaving variable gadget demand \((x_i^s,x_i^t)\) (where \(x_i=e(l)\)) does not travel through the positive literals of \(x_i\). By definition of \(q\), this occurs if and only if \(g(\phi,y)\) assigns \emph{true} to \(x_i\). Thus, a clause gadget demand given 0 delay by \(z(y)\) corresponds to a clause in \(\phi\) satisfied by \(g(\phi,y)\), thus proving \cref{eq:crucial-claim}.

Next, by definition of \(\gamma\) and \(z\), we have:
\begin{align}\label[ineq]{eq:gamma-z}
7(n+m)+(m-\gamma)\leq c_\mathcal{Q}(f(\phi),z(y))\leq c_\mathcal{Q}(f(\phi),y).
\end{align}
Combining \cref{eq:crucial-claim,eq:gamma-z}, we get:
\begin{align}\label[ineq]{eq:super-ineq}
c_\mathcal{P}(\phi,g(\phi,y))\geq \gamma\geq 7n+8m-c_\mathcal{Q}(f(\phi),y).
\end{align}
Finally, \cref{eq:opt-ineq-lemma,eq:super-ineq} give us:
\begin{align*}
OPT_\mathcal{P}(\phi)-c_\mathcal{P}(\phi,g(\phi,y)) &\leq \big(7n+8m-OPT_\mathcal{Q}(f(\phi))\big)-\big(7n+8m-c_\mathcal{Q}(f(\phi),y)\big)
\\
&=\beta\cdot\big(c_\mathcal{Q}(f(\phi),y)-OPT_\mathcal{Q}(f(\phi))\big),
\end{align*}
for \(\beta=1\), as desired. This completes the proof.
\end{proof}
\begin{remark}
It is straightforward to tweak the above construction and analysis to show that min-max Time Disjoint Walks is NP-hard on DAGs with \(\Delta\leq 3\).
\end{remark}

\section{Approximation algorithm}\label{section:approx-alg}
\subsection{Algorithm}
We present \cref{algs:clever}, which approximates TDW by finding shortest paths to satisfy each demand, and then greedily assigning delays to each trip (with priority given to shorter trips). To simplify notation, \emph{we assume that the inputted terminal pairs are ordered by nondecreasing shortest path length} (if not, we may simply sort the indices after finding the shortest demand-satisfying paths). The algorithm clearly runs in polynomial time, and the \(\mathsf{bad\_delay}\) variables ensure the feasibility of its output. Next, we briefly note the following easy bound:

\begin{algorithm}[h]
\caption{Shortest paths \& greedy delays, with priority to shorter paths.}\label{algs:clever}
\begin{algorithmic}[1]
\Require\tabto{50pt}\(x := (G:=(V,E),\lambda:E\rightarrow\mathbb{Z}_{\geq1},\mathcal{T}:=\{(s_1,t_1),\dots,(s_k,t_k)\})\in I_{TDW}\)
\Ensure\tabto{50pt}\(y\in SOL_{TDW}(x)\)
\State \(y\gets\{\}\)
\LineComment{\underline{Get shortest paths and dummy delays}:}
\For{\(i\in[k]\)}
    \State \(W_i\gets \mathsf{Dijkstra}(G,\lambda,s_i,t_i)\)
    \State \(d_i\gets0\)
    \State \(y\gets y\cup(d_i,W_i)\)
\EndFor
\LineComment{\underline{Greedily assign delays, with priority given to shorter paths}:}
\For{\(i\in[k]\)}
\State \(\mathsf{bad\_delays}_{i}\gets\{\}\)
\For{\(h\in[i-1]\)}
\State \(\mathsf{bad\_delays}_{i,h}\gets\{\}\)
\For{\(v\in W_{h}\cap W_{i}\)}
\State \(\mathsf{bad\_delay}\gets (d_{h}+\lambda(W_{h},v)-\lambda(W_{i},v))\)
\State \(\mathsf{bad\_delays}_{i,h}\gets \mathsf{bad\_delays}_{i,h}\cup\{\mathsf{bad\_delay}\}\)
\EndFor
\State \(\mathsf{bad\_delays}_{i}\gets \mathsf{bad\_delays}_{i}\cup \mathsf{bad\_delays}_{i,h}\)
\EndFor
\State \(d_{i}\gets\min(\mathbb{Z}_{\geq0}\setminus\mathsf{bad\_delays}_i)\)
\EndFor
\State \textbf{return} \(y\)
\end{algorithmic}
\end{algorithm}

\begin{proposition}\label[prop]{prop:very-easy-bound}
\cref{algs:clever} has an approximation ratio of \(O(k)\) for min-sum Time Disjoint Walks on general digraphs.
\end{proposition}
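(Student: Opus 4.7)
The plan is a straightforward two-sided estimate: lower-bound the optimum by the sum of shortest-path lengths, and upper-bound the algorithm's cost by controlling how large the greedy delay $d_i$ can grow.

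First, for each demand $i\in[k]$, let $\ell_i^\ast$ denote the shortest $s_i$-to-$t_i$ path length in $(G,\lambda)$. Since every feasible solution must assign a nonnegative delay and a walk of length at least $\ell_i^\ast$ to demand $i$, we immediately get $OPT\geq\sum_{i\in[k]}\ell_i^\ast$. The algorithm picks $W_i$ to be a shortest path, so $\lambda(W_i)=\ell_i^\ast$ and $|W_i|\leq\ell_i^\ast+1$ (each arc has length at least $1$).

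Next, I would bound $d_i$ by the size of $\mathsf{bad\_delays}_i$. By construction, for each previously processed demand $h<i$, the contribution $\mathsf{bad\_delays}_{i,h}$ has cardinality at most $|W_h\cap W_i|\leq|W_i|$. Therefore
\[
|\mathsf{bad\_delays}_i|\leq(i-1)|W_i|\leq(k-1)(\ell_i^\ast+1).
\]
Since $d_i$ is the smallest nonnegative integer outside $\mathsf{bad\_delays}_i$, we get $d_i\leq|\mathsf{bad\_delays}_i|\leq(k-1)(\ell_i^\ast+1)$. Because demand terminals are distinct and arc lengths are positive integers, $\ell_i^\ast\geq1$, so $\ell_i^\ast+1\leq 2\ell_i^\ast$ and hence $d_i+\lambda(W_i)\leq 2(k-1)\ell_i^\ast+\ell_i^\ast=(2k-1)\ell_i^\ast$.

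Summing over $i$ and using the lower bound on $OPT$,
\[
\sum_{i\in[k]}\bigl(d_i+\lambda(W_i)\bigr)\leq(2k-1)\sum_{i\in[k]}\ell_i^\ast\leq(2k-1)\cdot OPT,
\]
which yields the $O(k)$ approximation ratio. There is no real obstacle here; the only delicate point is the counting step that bounds $|\mathsf{bad\_delays}_i|$, and the observation that $|W_i|\leq\ell_i^\ast+1$ allows this combinatorial bound to be converted into a bound against the lower bound $\sum_i\ell_i^\ast\leq OPT$. No use of the ordering-by-length is needed for this loose estimate; it only matters for the sharper bounds established later in the paper.
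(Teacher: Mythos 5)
Your proof is correct, but it takes a different route from the paper's. You bound each delay by a direct cardinality count: since $d_i$ is the least nonnegative integer outside $\mathsf{bad\_delays}_i$, you get $d_i\leq|\mathsf{bad\_delays}_i|\leq(i-1)\,|W_i|\leq(k-1)(\lambda(W_i)+1)$, which (using $\lambda(W_i)=\ell_i^\ast\geq1$, valid because terminals are distinct and arc lengths are positive, and because Dijkstra returns simple paths so each shared vertex contributes one bad delay per earlier trip) yields the per-demand bound $d_i+\lambda(W_i)\leq(2k-1)\ell_i^\ast$; summing and comparing against $OPT\geq\sum_i\ell_i^\ast$ finishes it. The paper instead proves by induction that $d_i\leq2\sum_{h\in[i-1]}\lambda(W_h)$, using the observation that every bad delay is at most $\max_{h\in[i-1]}(d_h+\lambda(W_h))$, and then compares the aggregated delays against the aggregated shortest-path lengths. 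Both arguments are sound and give the same $O(k)$ ratio; yours has the mild advantage of giving a per-demand multiplicative guarantee ($d_i+\lambda(W_i)\leq(2k-1)\ell_i^\ast$), and its counting of $|\mathsf{bad\_delays}_{i,h}|$ anticipates the $\mu_i$-style argument the paper uses later for the sharper $\Theta(k/\log k)$ bound on bounded-degree DAGs, while the paper's inductive bound on $d_i$ is what its $O(k)$ chain of inequalities is built on. Your closing remark is also fair: the ordering by path length is not needed for this coarse estimate in either argument.
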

\begin{proof}
Let \(x:=(G,\lambda,\mathcal{T})\in I_{TDW}\), and let \(\mathcal{A}(x)\in SOL_{TDW}\) be the output of \cref{algs:clever} on \(x\). First, we show by induction that for each \(i\in[k]\),
\begin{align*}
d_{i}\leq2\sum_{h\in[i-1]}\lambda(W_{h}).
\end{align*}
For the base case \(i=1\), note that \(\mathsf{bad\_delays}_{i}=\emptyset\) and so \(d_{i}\) = 0. For \(i>1\), first observe that by definition of \(\mathsf{bad\_delay}\), we have \(d_{i}\leq1+\max_{h\in[i-1]}(d_{h}+\lambda(W_{h}))\). Thus,
\begin{align*}
d_{i}\leq&\ 1+\max_{h\in[i-1]}\bigg(2\sum_{h^\prime\in[h-1]}\lambda(W_{h^\prime})+\lambda(W_{h})\bigg) \tag*{\text{(induction hypothesis)}}\\
\leq&\ 1+2\sum_{h^\prime\in[i-2]}\lambda(W_{h^\prime}) + \lambda(W_{i-1}) \tag*{\text{(pick \(h=i-1\))}}\\
\leq&\ 2\sum_{h^\prime\in[i-1]}\lambda(W_{h^\prime}), \tag*{\text{(trips have length \(\geq 1\))}}
\end{align*}
completing the induction. Now, recallling that our algorithm uses the shortest paths to satisfy each demand, and that it assigns delays to shorter paths first, we can bound the approximation ratio as follows:
\begin{align*}
\rho \leq \frac{c_{TDW}(x,\mathcal{A}(x))}{OPT_{TDW}(x)} &\leq \frac{\sum_{i\in[k]}(d_{i}+\lambda(W_{i}))}{\sum_{i\in[k]}\lambda(W_{i})} \leq 1 + \frac{2\sum_{i\in[k]}\sum_{h\in[i-1]}\lambda(W_{h})}{\sum_{i\in[k]}\lambda(W_{i})}\\
&\leq 1 + \frac{2k\sum_{i\in[k]}\lambda(W_{i})}{\sum_{i\in[k]}\lambda(W_{i})} = O(k).
\end{align*}
\end{proof}
\begin{remark}
It is straightforward to tweak the above analysis to show that for min-max Time Disjoint Walks, \cref{algs:clever} has an approximation ratio of \(O(k)\) on general digraphs.
\end{remark}

\subsection{Analysis on bounded-degree DAGs}
We now show that our algorithm is able to achieve a better approximation ratio on bounded-degree DAGs. In what follows, we call a directed graph a ``\((2,l)\)-in-tree'' if it is a perfect binary tree of depth \(l\), in which every arc points toward the root. Analogously, a ``\((2,l)\)-out-tree'' is a perfect binary tree of depth \(l\), in which every arc points away from the root.

\begin{theorem}\label{thm:best-ratio}
\cref{algs:clever} achieves an approximation ratio of \(\Theta(k/\log k)\) for min-sum Time Disjoint Walks on bounded-degree DAGs.
\end{theorem}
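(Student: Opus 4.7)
The plan is to prove matching upper and lower bounds: the algorithm's cost is at most $O(k/\log k)$ times optimum on any bounded-degree DAG, and there is a family of bounded-degree DAG instances that forces this ratio.

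For the upper bound, the crucial observation is that $W_1,\dots,W_k$ are shortest paths, so for any pair of shared vertices $u,v\in W_h\cap W_i$ both $u$-to-$v$ subpaths have length $d(u,v)$, which gives $\lambda(W_h,v)-\lambda(W_i,v)=\lambda(W_h,u)-\lambda(W_i,u)$. Hence the ``bad delay'' $d_h+\lambda(W_h,v)-\lambda(W_i,v)$ is independent of the shared vertex chosen, so $|\mathsf{bad\_delays}_{i,h}|\le 1$ for every $h<i$. Summing, $\sum_i d_i\le P$, where $P$ is the number of intersecting pairs among the chosen shortest paths, so $c_{TDW}(x,\mathcal{A}(x))\le L+P$ with $L:=\sum_i\lambda(W_i)$. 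Since $OPT\ge L$, it suffices to show $P=O((k/\log k)\,L)$. I would split into two cases: if $L\ge c\,k\log k$, then the trivial estimate $P\le\binom{k}{2}$ already gives $P/L=O(k/\log k)$; otherwise I invoke the bounded-degree \emph{concentrator inequality}---if $m$ shortest paths pass through a vertex $v$, their $m$ distinct sources must fit in $v$'s ancestor-ball, whose radius-$r$ size is $O(\Delta^r)$, forcing at least $m/2$ of those paths to have length $\Omega(\log m)$---and use it to charge $P\le\sum_v\binom{N_v}{2}$ bucket by bucket against the in-length that the concentrator inequality forces.

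For the lower bound, I would construct, for $k=2^l$, a bounded-degree DAG by taking a perfect $(2,l)$-in-tree with leaves $s_1,\dots,s_k$ and root $r_1$, joining it via a single arc $r_1\to r_2$ to a perfect $(2,l)$-out-tree with root $r_2$ and leaves $t_1,\dots,t_k$, and attaching for each $i\in[k]$ a ``bypass'' arc $s_i\to t_i$ of length $2l+1$. Every vertex has total degree at most $3$, and with demands $\mathcal{T}=\{(s_i,t_i)\mid i\in[k]\}$, the unique shortest $s_i$-to-$t_i$ walk (length $2l$, through $r_1,r_2$) is strictly shorter than the bypass (length $2l+1$), so \cref{algs:clever} picks the tree route for every trip. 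At any shared in-tree or out-tree vertex $v$, the arrival time of trip $i$ equals $d_i$ plus a constant depending only on $v$, so time-disjointness at $r_1$ alone forces the delays to be pairwise distinct non-negative integers, and the greedy consequently assigns $\{d_1,\dots,d_k\}=\{0,1,\dots,k-1\}$. The algorithm thus pays $\binom{k}{2}+2lk=\Theta(k^2)$, whereas routing every trip along its bypass with zero delay is collision-free and costs $k(2l+1)=\Theta(k\log k)$, giving the ratio $\Omega(k/\log k)$.

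I expect the main obstacle to be the ``$L$ small'' case of the upper bound: bounding $P\le O((k/\log k)\,L)$ in a bounded-degree DAG when there is a heavy concentrator vertex. The naive estimate $\sum_v\binom{N_v}{2}\le \max_v N_v\cdot\sum_v N_v$ loses a full $\log k$ factor precisely when $\max_v N_v$ is close to $k$, so I intend to partition vertices into dyadic buckets $B_j:=\{v: N_v\in[2^j,2^{j+1})\}$ and charge each bucket's contribution $\Theta(2^{2j}|B_j|)$ against the $\Omega(2^j\cdot j\cdot|B_j|)$ of in-length forced by the concentrator inequality; a geometric sum over $j$ should yield the $k/\log k$ saving. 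Threading this through while avoiding double-counting the length of a single path at the several vertices that claim it is the delicate step.
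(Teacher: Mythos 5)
Your lower bound is the paper's construction almost verbatim (a $(2,l)$-in-tree and $(2,l)$-out-tree joined root-to-root, with length-$(2l+1)$ bypass arcs), and it is correct up to an off-by-one: with unit arcs the tree route has length $2l+1$, not $2l$, so it ties with the bypass rather than strictly beating it; either invoke the paper's ``no tie-breaking scheme is specified'' convention or lengthen the bypass by one arc, which changes nothing asymptotically. The first step of your upper bound also matches the paper: subpaths of shortest paths between two shared vertices of a DAG must have equal length, so $|\mathsf{bad\_delays}_{i,h}|\le 1$ and hence $d_i\le\mu_i:=|\{h<i\mid W_h\cap W_i\ne\emptyset\}|$.

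The gap is in how you propose to finish the upper bound. The inequality you want, $P=O\big((k/\log k)\,L\big)$, is true, but the route through $P\le\sum_v\binom{N_v}{2}$ cannot establish it, because that intermediate quantity is genuinely larger than $P$: it counts an intersecting pair once for \emph{every} vertex the two paths share. Concretely, take a perfect binary in-tree on $k$ leaves feeding a directed path of $\log_2 k$ unit arcs feeding a perfect binary out-tree; the $k$ (unique, hence shortest) demand paths all share the $\log_2 k+1$ central vertices, so $\sum_v\binom{N_v}{2}\ge(\log_2 k)\binom{k}{2}=\Theta(k^2\log k)$ while $L=\Theta(k\log k)$ --- the ratio is $\Theta(k)$, not $O(k/\log k)$, and this instance sits squarely inside your ``$L$ small'' case. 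No dyadic bucketing can recover the lost $\log k$, because the loss is already incurred in the inequality $P\le\sum_v\binom{N_v}{2}$; this is exactly the double-counting you flag as ``delicate,'' and it is fatal rather than delicate. The repair is to count each intersecting pair exactly once, at its higher-indexed member, i.e.\ to bound $\mu_i$ directly: your concentrator observation, applied to each of the at most $\lambda(W_i)+1$ vertices of $W_i$ and using that earlier paths are no longer than $W_i$ and have distinct sources, gives $\mu_i\le(\lambda(W_i)+1)\Delta^{\lambda(W_i)+1}\le\Delta^{4\lambda(W_i)}$, hence $\lambda(W_i)=\Omega(\log d_i)$ and $d_i/\lambda(W_i)=O(d_i/\log d_i)=O(k/\log k)$ since $d_i\le\mu_i\le k$; the mediant inequality $\sum_i(d_i+\lambda(W_i))/\sum_i\lambda(W_i)\le 1+\max_i d_i/\lambda(W_i)$ then finishes. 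This per-trip argument is exactly the paper's proof, and it needs no case split on $L$.
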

\begin{proof}
\underline{Upper bound}:
Let \(x:=(G,\lambda,\mathcal{T})\in I_{TDW}\) such that \(G\) is a bounded-degree DAG. Let \(\mathcal{A}(x)\in SOL_{TDW}\) be the output of \cref{algs:clever} on \(x\). We may assume \(\Delta>1\), because otherwise it is straightforward to show \(\mathcal{A}(x)\) is optimal. In what follows, we will justify the following string of inequalities that proves the upper bound:
\begin{align}
\rho \leq \frac{c_{TDW}(x,\mathcal{A}(x))}{OPT_{TDW}(x)} &\leq \frac{\sum_{i\in[k]}\big(d_i+\lambda(W_i)\big)}{\sum_{i\in[k]}\lambda(W_i)}\label[ineq]{eq:FIRST}\\
&\leq1+\frac{d_{i^\ast}}{\lambda(W_{i^\ast})}, i^\ast:=\argmax_{i\in[k]}\bigg(\frac{d_i}{\lambda(W_i)}\bigg)\label[ineq]{eq:SECOND}\\
&\leq1+O(1)\cdot\frac{d_{i^\ast}}{\log d_{i^\ast}}\label[ineq]{eq:THIRD}\\
&\leq 1+O(1)\cdot\frac{k}{\log k}=O(k/\log k).\label[ineq]{eq:FOURTH}
\end{align}
\Cref{eq:FIRST} is clear, because our algorithm takes the shortest path to satisfy each demand. \Cref{eq:SECOND} follows (by induction) from the following general observation: given \(d_1,d_2\in\mathbb{Z}_{\geq0}\) and \(\lambda_1,\lambda_2\in\mathbb{Z}_{\geq1}\), observe \(d_1/\lambda_1\leq d_2/\lambda_2\implies (d_1+d_2)/(\lambda_1+\lambda_2)\leq d_2/\lambda_2\), and thus \((d_1+d_2)/(\lambda_1+\lambda_2)\leq\max(d_1/\lambda_1,d_2/\lambda_2)\).

To show \cref{eq:THIRD}, we need two observations. We first observe that for each \(i\in[k]\):
\begin{align*}
d_{i}\leq|\mathsf{bad\_delays}_{i}|\leq|\{h\in[i-1]\mid W_{h}\cap W_{i}\neq\emptyset\}|=:\mu_i
\end{align*}
To see this, suppose for contradiction that there exists some \(h\in[i-1]\) with \(W_{h}\cap W_{i}\neq\emptyset\) and \(|\mathsf{bad\_delays}_{i,h}| > 1\). Then, by definition of \(\mathsf{bad\_delay}\), there exist vertices \(u,v\in W_h\cap W_{i}\) and delays \(\delta_u\neq\delta_v\in\mathbb{Z}_{\geq0}\) such that:
\begin{align*}
\delta_u+\lambda(W_{i},u)&=d_{h}+\lambda(W_{h},u),\\
\delta_v+\lambda(W_{i},v)&=d_{h}+\lambda(W_{h},v),\\
\lambda(W_{h},u)
-\lambda(W_{h},v)
&=\lambda(W_{i},u)-
\lambda(W_{i},v)
+(\delta_u-\delta_v),
\end{align*}
where the last equality follows from the first two. But because \(\delta_u\neq\delta_v\), this implies that the length of the path that \(W_{h}\) and \(W_{i}\) use to travel between \(u\) and \(v\) is not the same. Because \(G\) is a DAG, \(W_{h}\) and \(W_{i}\) must visit \(u\) and \(v\) in the same order, implying that one of these walks is not taking the shortest path from \(u\) to \(v\), which contradicts the definition of the algorithm. Because \(|\mathsf{bad\_delays}_{i,h}|=0\) if \(W_{h}\cap W_{i}=\emptyset\), we have \(d_{i}\leq\mu_i\).

Next, we observe that:
\begin{align*}
\mu_i\leq\min(\Delta^{4\lambda(W_{i})},k).
\end{align*}
Showing \(\mu_i\leq k\) is trivial, by definition of \(\mu_i\) and because \(i\in[k]\). To show \(\mu_i\leq\Delta^{4\lambda(W_{i})}\), first note that in a digraph with max degree \(\Delta\), the number of paths that (i) have \(z\) arcs, (ii) start at distinct vertices, and (iii) all end at a common vertex, is upper bounded by \(\Delta^{z}\) (this is easy to show by induction on \(z\)). Thus, the number of paths with \(\leq z\) arcs, in addition to properties (ii) and (iii), is upper bounded by \(\sum_{l=0}^z\Delta^l\leq\Delta^{z+1}\), for \(\Delta>1\). Call this lemma (\(\ast\)).

Now, note that for each \(h\in[i-1]\) we may consider each \(W_{h}\) to terminate once it first hits a vertex in \(W_{i}\) (i.e., cut off all vertices that are hit afterwards) without changing the value of \(\mu_i\). Now, recall the following facts about our problem and algorithm: (I) each inputted demand has a unique source; (II) each edge in our digraph has length \(\geq1\); (III) for all \(h\in[i-1]\), \(\lambda(W_{h})\leq\lambda(W_{i})\). Thus, by (III) and lemma (\(\ast\)), each vertex in \(W_{i}\) can be hit by at most \(\Delta^{\lambda(W_{i})+1}\) walks in \(\{W_{1},\dots,W_{i-1}\}\). Furthermore, (II) tells us that the number of vertices in \(W_{i}\) is no more than \(\lambda(W_i)+1\). Thus, recalling that \(\Delta>1\), and that no demands have the same source and destination, and (II), we see that
\[
\mu_i \leq (\lambda(W_{i})+1)\Delta^{\lambda(W_{i})+1}\leq\Delta^{2(\lambda(W_{i})+1)}\leq\Delta^{4\lambda(W_{i})},
\]
as desired. We now note that we may assume \(d_{i}\) is greater than any constant (otherwise, \cref{eq:SECOND} automatically proves a constant approximation ratio, completing the proof). Thus, from this and the above observations, we have \(\log(d_{i})\leq 4\lambda(W_{i})\log(\Delta)\). This proves \cref{eq:THIRD}, because our graph has bounded degree.

\Cref{eq:FOURTH} is not difficult: as stated above, we will always have \(d_{i}\leq k\), and we may always assume \(d_{i}\geq 3\). Basic calculus shows the function \(x/\log x\) increases over \(x\geq3\).

\underline{Lower bound}: We show \(\forall l\in\N_{\geq1},k:=2^l\), \(\exists (G_k,\lambda_k,\mathcal{T}_k)\in I_{TDW}\) such that \(G_k\) is a bounded-degree DAG and \cref{algs:clever} achieves an approximation ratio of \(\Omega(k/\log k)\). Construct \(G_k\) by taking a \((2,l)\)-in-tree \(A_S\) and a \((2,l)\)-out-tree \(A_T\). Draw an arc from the root of the former to the root of the latter. Then, arbitrarily pair each leaf (source) in \(A_S\) with a unique leaf (destination) in \(A_T\). For each such pair, draw an arc from source to destination (called a ``bypass arc''), and add a demand to \(\mathcal{T}_k\). Finally, define \(\lambda_k\) to assign length \(1+2l\) to each ``bypass'' arc, and length 1 to all other arcs. We refer the reader to \cref{fig:master-figure}, part (i).

We may assume our algorithm does not satisfy demands using the bypass arcs (as all demand-satisfying paths have length \(2l+1\), and no tie-breaking scheme is specified). Thus, each demand-satisfying path uses the root of \(A_S\), which incurs a total delay of \(0+1+\ldots+(k-1)=\Omega(k^2)\) and total path length of \(k\cdot(1+2l)\). Had the bypass arcs been used, no delay would have been required, and the total path length would have still been \(k\cdot(1+2l)\). Thus, our algorithm achieves an approximation ratio of \((\Omega(k^2)+k\cdot(1+2l))/(k\cdot(1+2l))=\Omega(k/\log k)\).
\end{proof}
\begin{remark}
It is straightforward to tweak the above analysis to show that for min-max Time Disjoint Walks, \cref{algs:clever} achieves an approximation ratio of \(\Theta(k/\log k)\) on bounded-degree DAGs.
\end{remark}

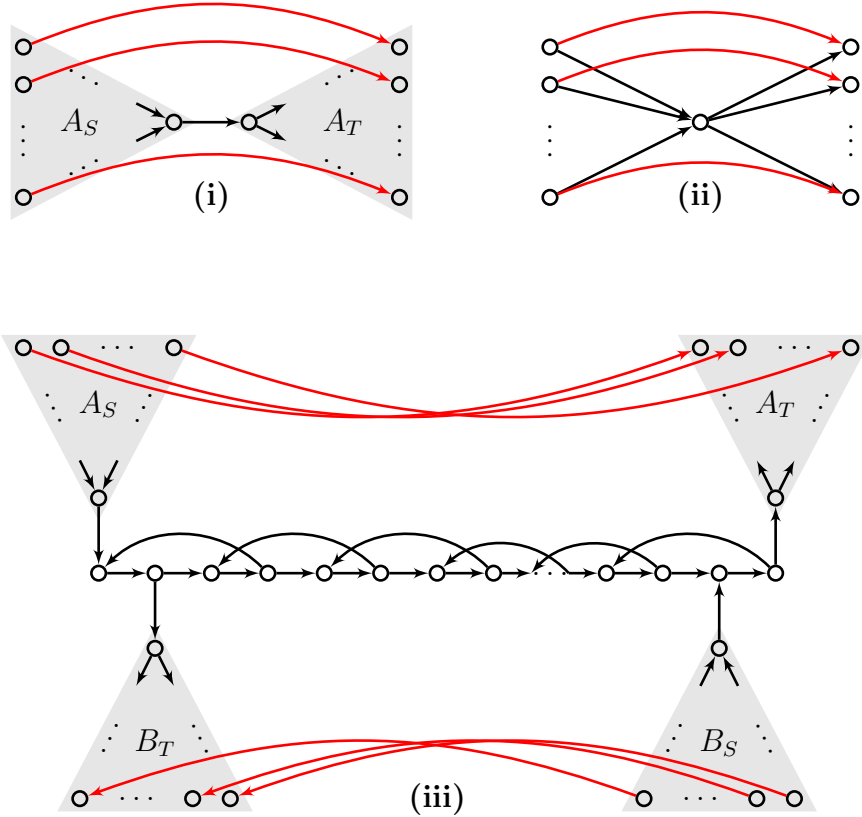
\begin{figure}[t]
\centering
\begin{tikzpicture}[line width=1pt]
\tikzset{vertex/.style = {shape=circle,draw,minimum size=0.5em, inner sep=0pt}}
\tikzset{edge/.style = {->,> = latex'}}

\fill[gray!20] (.83,2.3) -- (.83,-.3) -- (3.3,1); 
\fill[gray!20] (6.17,2.3) -- (6.17,-.3) -- (3.7,1); 

\node[vertex] (1bl) at (1,0) {};
\node[vertex] (1ml) at (1,1.5) {};
\node[vertex] (1tl) at (1,2) {};
\coordinate (1tinvisil) at (2.5, 1.25) {};
\coordinate (1binvisil) at (2.5, 0.75) {};
\path (1bl) -- (1ml) node [font=\large, midway, sloped] {\(\dots\)};
\path (1tl) -- (1tinvisil) node [font=\large, midway, sloped] {\(\dots\)};
\path (1bl) -- (1binvisil) node [font=\large, midway, sloped] {\(\dots\)};
\node[vertex,draw=none] (1AS_label) at (1.75,1) {\large \(A_S\)};

\node[vertex] (1lc) at (3,1) {};
\node[vertex] (1rc) at (4,1) {};

\node[vertex] (1br) at (6,0) {};
\node[vertex] (1mr) at (6,1.5) {};
\node[vertex] (1tr) at (6,2) {};
\coordinate (1tinvisir) at (4.5, 1.25) {};
\coordinate (1binvisir) at (4.5, 0.75) {};
\path (1br) -- (1mr) node [font=\large, midway, sloped] {\(\dots\)};
\path (1tr) -- (1tinvisir) node [font=\large, midway, sloped] {\(\dots\)};
\path (1br) -- (1binvisir) node [font=\large, midway, sloped] {\(\dots\)};
\node[vertex,draw=none] (1AT_label) at (5.25,1) {\large \(A_T\)};

\draw[edge] (1tinvisil) to (1lc);
\draw[edge] (1binvisil) to (1lc);
\draw[edge] (1lc) to (1rc);
\draw[edge] (1rc) to (1tinvisir);
\draw[edge] (1rc) to (1binvisir);

\draw[edge, bend left=22, color=red] (1tl) to (1tr);
\draw[edge, bend left=22, color=red] (1ml) to (1mr);
\draw[edge, bend left=22, color=red] (1bl) to (1br);

\node[vertex,draw=none] (fig1_label) at (3.5,0) {\large \textbf{(i)}};

\node[vertex] (2bl) at (8,0) {};
\node[vertex] (2ml) at (8,1.5) {};
\node[vertex] (2tl) at (8,2) {};
\path (2bl) -- (2ml) node [font=\large, midway, sloped] {\(\dots\)};

\node[vertex] (2c) at (10,1) {};

\node[vertex] (2br) at (12,0) {};
\node[vertex] (2mr) at (12,1.5) {};
\node[vertex] (2tr) at (12,2) {};
\path (2br) -- (2mr) node [font=\large, midway, sloped] {\(\dots\)};

\draw[edge] (2bl) to (2c);
\draw[edge] (2ml) to (2c);
\draw[edge] (2tl) to (2c);

\draw[edge] (2c) to (2br);
\draw[edge] (2c) to (2mr);
\draw[edge] (2c) to (2tr);

\draw[edge, bend left=22, color=red] (2tl) to (2tr);
\draw[edge, bend left=22, color=red] (2ml) to (2mr);
\draw[edge, bend left=22, color=red] (2bl) to (2br);

\node[vertex,draw=none] (fig2_label) at (10,0) {\large \textbf{(ii)}};

\node[vertex] (c1) at (2,-5) {};
\node[vertex] (c2) at (2.75,-5) {};
\node[vertex] (c3) at (3.5,-5) {};
\node[vertex] (c4) at (4.25,-5) {};
\node[vertex] (c5) at (5,-5) {};
\node[vertex] (c6) at (5.75,-5) {};
\node[vertex] (c7) at (6.5,-5) {};
\node[vertex] (c8) at (7.25,-5) {};
\coordinate (c8p5) at (7.75,-5) {};
\coordinate (c9p5) at (8.25,-5) {};
\node[vertex] (c10) at (8.75,-5) {};
\node[vertex] (c11) at (9.5,-5) {};
\node[vertex] (c12) at (10.25,-5) {};
\node[vertex] (c13) at (11,-5) {};
\path (c8p5) -- (c9p5) node [font=\large, midway, sloped] {\(\dots\)};
\draw[edge] (c1) to (c2);
\draw[edge] (c2) to (c3);
\draw[edge] (c3) to (c4);
\draw[edge] (c4) to (c5);
\draw[edge] (c5) to (c6);
\draw[edge] (c6) to (c7);
\draw[edge] (c7) to (c8);
\draw[edge] (c8) to (c8p5);
\draw[edge] (c9p5) to (c10);
\draw[edge] (c10) to (c11);
\draw[edge] (c11) to (c12);
\draw[edge] (c12) to (c13);

\draw[edge, bend right=45] (c13) to (c10);
\draw[edge, bend right=45] (c11) to (c8p5);
\draw[edge, bend right=45] (c9p5) to (c7);
\draw[edge, bend right=45] (c8) to (c5);
\draw[edge, bend right=45] (c6) to (c3);
\draw[edge, bend right=45] (c4) to (c1);

\fill[gray!20] (0.7, -1.83 ) -- (3.3,-1.83) -- (2,-4.3); 
\node[vertex,draw=none] (3AS_label) at (2,-2.75) {\large \(A_S\)};
\node[vertex] (luAS) at (1,-2) {};
\node[vertex] (lpAS) at (1.5,-2) {};
\node[vertex] (lfAS) at (3,-2) {};
\node[vertex] (rAS) at (2,-4) {};
\coordinate (linvisiAS) at (1.75, -3.5) {};
\coordinate (rinvisiAS) at (2.25, -3.5) {};
\path (lpAS) -- (lfAS) node [font=\large, midway, sloped] {\(\dots\)};
\path (luAS) -- (linvisiAS) node [font=\large, midway, sloped] {\(\dots\)};
\path (lfAS) -- (rinvisiAS) node [font=\large, midway, sloped] {\(\dots\)};
\draw[edge] (linvisiAS) to (rAS);
\draw[edge] (rinvisiAS) to (rAS);
\draw[edge] (rAS) to (c1);

\fill[gray!20] (9.7, -1.83 ) -- (12.3,-1.83) -- (11,-4.3); 
\node[vertex,draw=none] (3AT_label) at (11,-2.75) {\large \(A_T\)};
\node[vertex] (luAT) at (10, -2) {};
\node[vertex] (lpAT) at (10.5, -2) {};
\node[vertex] (lfAT) at (12, -2) {};
\node[vertex] (rAT) at (11, -4) {};
\coordinate (linvisiAT) at (10.75, -3.5) {};
\coordinate (rinvisiAT) at (11.25, -3.5) {};
\path (lpAT) -- (lfAT) node [font=\large, midway, sloped] {\(\dots\)};
\path (luAT) -- (linvisiAT) node [font=\large, midway, sloped] {\(\dots\)};
\path (lfAT) -- (rinvisiAT) node [font=\large, midway, sloped] {\(\dots\)};
\draw[edge] (rAT) to (linvisiAT);
\draw[edge] (rAT) to (rinvisiAT);
\draw[edge] (c13) to (rAT);

\draw[edge, bend right=20, color=red] (luAS) to (luAT);
\draw[edge, bend right=20, color=red] (lpAS) to (lpAT);
\draw[edge, bend right=20, color=red] (lfAS) to (lfAT);

\fill[gray!20] (1.45, -8.17 ) -- (4.05,-8.17) -- (2.75,-5.7); 
\node[vertex,draw=none] (3BT_label) at (2.75,-7.25) {\large \(B_T\)};
\node[vertex] (rBT) at (2.75,-6) {};
\node[vertex] (luBT) at (3.75,-8) {};
\node[vertex] (lpBT) at (3.25,-8) {};
\node[vertex] (lfBT) at (1.75,-8) {};
\coordinate (rinvisiBT) at (2.5, -6.5) {};
\coordinate (linvisiBT) at (3, -6.5) {};
\path (lpBT) -- (lfBT) node [font=\large, midway, sloped] {\(\dots\)};
\path (luBT) -- (linvisiBT) node [font=\large, midway, sloped] {\(\dots\)};
\path (lfBT) -- (rinvisiBT) node [font=\large, midway, sloped] {\(\dots\)};
\draw[edge] (rBT) to (linvisiBT);
\draw[edge] (rBT) to (rinvisiBT);
\draw[edge] (c2) to (rBT);

\fill[gray!20] (8.95, -8.17 ) -- (11.55,-8.17) -- (10.25,-5.7); 
\node[vertex,draw=none] (3BT_label) at (10.25,-7.25) {\large \(B_S\)};
\node[vertex] (rBS) at (10.25,-6) {};
\node[vertex] (luBS) at (11.25,-8) {};
\node[vertex] (lpBS) at (10.75,-8) {};
\node[vertex] (lfBS) at (9.25,-8) {};
\coordinate (rinvisiBS) at (10, -6.5) {};
\coordinate (linvisiBS) at (10.5, -6.5) {};
\path (lpBS) -- (lfBS) node [font=\large, midway, sloped] {\(\dots\)};
\path (luBS) -- (linvisiBS) node [font=\large, midway, sloped] {\(\dots\)};
\path (lfBS) -- (rinvisiBS) node [font=\large, midway, sloped] {\(\dots\)};
\draw[edge] (linvisiBS) to (rBS);
\draw[edge] (rinvisiBS) to (rBS);
\draw[edge] (rBS) to (c12);

\draw[edge, bend right=20, color=red] (luBS) to (luBT);
\draw[edge, bend right=20, color=red] (lpBS) to (lpBT);
\draw[edge, bend right=20, color=red] (lfBS) to (lfBT);

\node[vertex,draw=none] (fig3_label) at (6.5,-8) {\large \textbf{(iii)}};
\end{tikzpicture}
\caption{(i): A bounded-degree DAG \(G_k\) upon which \cref{algs:clever} achieves an approximation ratio of \(\Omega(k/\log k)\); (ii): A DAG \(G_k\) upon which \cref{algs:clever} achieves an approximation ratio of \(\Omega(k)\); (iii): A bounded-degree digraph \(G_k\) upon which \cref{algs:clever} achieves an approximation ratio of \(\Omega(k)\).}
\label{fig:master-figure}
\end{figure}
\subsection{Analysis on DAGs}
We show that if we no longer require the graph family in \cref{thm:best-ratio} to have bounded degree, our algorithm loses its improved approximation ratio.
\begin{theorem}\label{thm:DAG-ratio}
\cref{algs:clever} has an approximation ratio of \(\Theta(k)\) for min-sum Time Disjoint Walks on DAGs.
\end{theorem}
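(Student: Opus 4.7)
The plan is to prove $\Theta(k)$ in two directions. The upper bound $O(k)$ is essentially free: \cref{prop:very-easy-bound} already establishes an $O(k)$ approximation ratio for \cref{algs:clever} on \emph{general} digraphs, and DAGs are a special case. So the entire content of the theorem beyond what we already have lies in exhibiting a family of DAG instances on which \cref{algs:clever} performs no better than $\Omega(k)$ times optimal.

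For the lower bound, I would use the construction sketched in \cref{fig:master-figure}, part (ii). Specifically, for each $k\geq 2$, build $G_k$ with vertex set $\{s_1,\ldots,s_k\}\cup\{c\}\cup\{t_1,\ldots,t_k\}$ and arcs $(s_i,c),(c,t_i)$ for every $i\in[k]$ together with ``bypass'' arcs $(s_i,t_i)$ for every $i\in[k]$. Assign length $1$ to each arc incident to $c$ and length $3$ to each bypass arc, and set $\mathcal{T}_k=\{(s_i,t_i)\mid i\in[k]\}$. This graph is clearly a DAG (sources $\to$ center $\to$ sinks, plus bypass arcs from sources to sinks), although it has unbounded degree at $c$.

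The key observations are then: (a) the unique shortest $s_i$-$t_i$ path has length $2$ and passes through $c$, so \cref{algs:clever} will route every demand through $c$; (b) because all $k$ routed walks meet at $c$ and each arrives after $1$ time unit, the greedy delay step is forced to assign delays $0,1,2,\ldots,k-1$ (in the order processed, which is arbitrary among ties but the particular order is irrelevant), giving total cost
\[
\sum_{i=1}^{k}(d_i+\lambda(W_i)) \;=\; \binom{k}{2} + 2k \;=\; \Omega(k^2);
\]
and (c) the feasible solution that routes each demand along its bypass arc with zero delay has total cost $3k$, so $OPT_{TDW}(G_k,\lambda_k,\mathcal{T}_k)\leq 3k$. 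Dividing yields an approximation ratio of $\Omega(k^2)/(3k)=\Omega(k)$ on this family, completing the lower bound.

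There is no real obstacle here, since the only nontrivial ingredient --- that \cref{algs:clever} is forced to pile every demand onto the unique shortest path through $c$ --- is immediate from the description of the algorithm and the choice of arc lengths. The one thing worth stating carefully is that the bypass arcs are \emph{not} chosen by the algorithm precisely because their length ($3$) strictly exceeds the shortest path length through $c$ ($2$), so Dijkstra's routine returns the center path regardless of tie-breaking. Combining this $\Omega(k)$ lower bound with the $O(k)$ upper bound from \cref{prop:very-easy-bound} gives the claimed $\Theta(k)$ ratio.
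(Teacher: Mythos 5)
Your proof is correct and takes essentially the same approach as the paper: the \(O(k)\) upper bound is inherited from \cref{prop:very-easy-bound}, and the lower bound uses the same star-with-bypass-arcs family depicted in \cref{fig:master-figure}, part (ii). The only difference is that you assign the bypass arcs length \(3\) rather than \(2\), so the algorithm is forced onto the central vertex for every tie-breaking rule, whereas the paper keeps both routes at length \(2\) and invokes the (adversarial) assumption that ties are broken against the bypass arcs --- a harmless, slightly more robust variant of the same argument.
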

\begin{proof}
By \cref{prop:very-easy-bound}, it suffices to construct a family of TDW instances on DAGs, defined over all \(k\in\mathbb{N}_{\geq1}\), for which our algorithm achieves an approximation ratio of \(\Omega(k)\). Construct \(G_k\) by fixing a ``root'' vertex and directly attaching \(2k\) leaves. Orient half of these arcs towards the root, and half of the arcs away from the root. Call each vertex with out-degree 1 a \emph{source}, and each vertex with in-degree 1 a \emph{destination}. Then, arbitrarily pair each source with a unique destination. For each pair, add an arc from the source to the destination (called a ``bypass arc''), and add a demand to \(\mathcal{T}_k\). Finally, let \(\lambda_k\) assign length 2 to each bypass arc, and length 1 to all other arcs. We refer the reader to \cref{fig:master-figure}, part (ii).

We may assume our algorithm does not satisfy demands using the bypass arcs (as all demand-satisfying paths have length 2, and no tie-breaking scheme is specified). Thus, each demand-satisfying path uses the root vertex, which incurs a total delay of \(0+1+\ldots+(k-1)=\Omega(k^2)\) and total path length of \(2k\). Had the bypass arcs been used, no delay would have been required, and the total path length would have still been \(2k\). Thus, our algorithm achieves an approximation ratio of \((\Omega(k^2)+2k)/(2k)=\Omega(k)\).
\end{proof}
\begin{remark}
It is straightforward to slightly tweak the above analysis to show that for min-max Time Disjoint Walks, \cref{algs:clever} has an approximation ratio of \(\Theta(k)\) on DAGs.
\end{remark}

\subsection{Analysis on bounded-degree digraphs}
In this section, we show that if we no longer require the graph family in \cref{thm:best-ratio} to be acyclic, our algorithm loses its improved approximation ratio.
\begin{theorem}\label{thm:bounded-deg-digraphs}
\cref{algs:clever} has an approximation ratio of \(\Theta(k)\) for min-sum Time Disjoint Walks on bounded-degree digraphs.
\end{theorem}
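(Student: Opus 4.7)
The plan is to mirror the structure of \cref{thm:best-ratio} and \cref{thm:DAG-ratio}: the $O(k)$ upper bound is immediate from \cref{prop:very-easy-bound}, since bounded-degree digraphs are a subclass of general digraphs, so the work lies entirely in exhibiting, for every sufficiently large $k$, an instance $(G_k,\lambda_k,\mathcal{T}_k)\in I_{TDW}$ in which $G_k$ has bounded degree and on which \cref{algs:clever} incurs a cost of $\Omega(k)$ times the optimum. The candidate construction is the one illustrated in \cref{fig:master-figure}(iii): a central directed path equipped with a family of short \emph{backward} arcs (turning the path into a bounded-degree graph with many small cycles), with a $(2,\log k)$-in-tree $A_S$ and a $(2,\log k)$-out-tree $A_T$ hung off its left and right endpoints respectively, and two further trees $B_S$ (in-tree) and $B_T$ (out-tree) hung off interior vertices of the path. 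The demands consist of $k/2$ pairs of the form (leaf of $A_S$, leaf of $A_T$) together with $k/2$ pairs of the form (leaf of $B_S$, leaf of $B_T$), and lengths are chosen so that the demands in the second group would happily route through the backward arcs (closing short cycles) if given any extra slack.

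The heart of the argument is a routing dichotomy for the demand pairs. First, I would calibrate the arc lengths so that the unique shortest walk for \emph{every} demand threads through a fixed bottleneck arc on the central path; in particular, the $B$-demands' alternative routes, which use the backward arcs to wrap around through the cycles, are strictly longer than the direct route through the bottleneck, but only by a constant factor. Because \cref{algs:clever} picks shortest paths first and then greedily inserts delays, all $k$ walks collide at the bottleneck and the algorithm is forced to assign delays $0,1,2,\dots,k-1$ (up to ties and off-by-one) to successive demands, producing total cost $\Omega(k^2) + k\cdot O(\log k)$. Second, I would exhibit an explicit feasible solution that routes each $A$-demand through the central path with a small delay and routes each $B$-demand through the backward-arc cycles of its tree with zero delay, so that no two walks meet at a common vertex at the same time; a straightforward counting then gives total cost $O(k\log k)$. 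Dividing, the approximation ratio on $G_k$ is at least $\Omega(k^2)/O(k\log k)=\Omega(k/\log k)$, which is \emph{not yet} the bound we want.

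To boost the lower bound from $\Omega(k/\log k)$ to $\Omega(k)$, I would replace the plain $A_S,A_T$ by \emph{short} tree-like bounded-degree gadgets whose root-to-leaf distance is only $O(1)$; this is possible precisely because we no longer insist on acyclicity, so cycles may be reused to ``fan in'' many sources to the bottleneck along walks of constant length. Concretely, the key modification is to replace each level of the binary in-tree by a constant-depth bounded-degree gadget in which every source reaches the root by a length-$O(1)$ walk that uses the local back-arcs (and similarly for the out-trees); the crucial check is that each such walk is genuinely a shortest walk under our choice of $\lambda_k$ and that \cref{algs:clever}'s tie-breaking cannot evade the bottleneck. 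With this refinement the algorithm's cost becomes $\Omega(k^2)+k\cdot O(1)=\Omega(k^2)$ while the alternative solution costs $O(k)$, yielding the desired ratio $\Omega(k)$.

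The main obstacle, and the step I expect will require the most care, is the last one: constructing a \emph{bounded-degree} gadget of constant depth that both (i) routes all $k$ sources to a common bottleneck along walks that \cref{algs:clever} must prefer over any cycle-using detour, and (ii) still admits a collision-free alternative routing of constant per-trip cost. The construction must exploit cycles aggressively (since a bounded-degree acyclic graph of constant depth contains only $O(1)$ vertices, ruling out $k$ distinct sources), yet be acyclic enough in the relevant shortest-path metric that the algorithm's greedy choice is forced; after that, the arithmetic matching the pattern of \cref{thm:DAG-ratio} is routine. A \cref{remark} transferring the bound to min-max TDW, by the same argument used after \cref{thm:DAG-ratio}, would close the section.
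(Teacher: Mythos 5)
Your upper bound is fine, and your instinct to build a lower-bound family around \cref{fig:master-figure}(iii) matches the paper, but the mechanism you propose for getting from $\Omega(k/\log k)$ to $\Omega(k)$ does not work, and it is not the mechanism the paper uses. Your plan hinges on a ``constant-depth bounded-degree gadget'' in which $k$ distinct sources each reach the bottleneck by a walk of length $O(1)$. No such gadget exists: in any digraph of max degree $\Delta$, the set of vertices that can reach a fixed vertex $r$ by a walk using at most $c$ arcs has size at most $\sum_{j=0}^{c}\Delta^j=O(\Delta^c)$, and since $\lambda\geq 1$ a walk of length $O(1)$ uses $O(1)$ arcs. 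Cycles let a walk revisit vertices but do not increase the number of distinct \emph{starting} vertices within a given radius, so for constant depth you are stuck with $O(1)$ sources --- exactly the obstruction you flagged, and it is fatal rather than a matter of care. With the trees kept at depth $l=\Theta(\log k)$ and delays growing only like $0,1,\dots,k-1$, your construction tops out at $\Omega(k/\log k)$ (and in the paper's actual instance the bypass arcs have length $\Theta(k)$, so the optimum is $\Theta(k^2)$, not $O(k\log k)$, making the ratio even worse for you).

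The idea you are missing is that dropping acyclicity should be exploited not to shorten the trees but to inflate the \emph{per-pair} delay contribution. In the bounded-degree DAG analysis (\cref{thm:best-ratio}), each earlier walk $W_h$ contributes at most one element to $\mathsf{bad\_delays}_{i,h}$, because two shortest paths in a DAG between two shared vertices must have equal length. In a digraph with cycles this fails: two shortest \emph{walks} can share many vertices with a different time offset at each one. The paper's construction arranges exactly this: the $A$-demands traverse the central path $c_1,\dots,c_{\hat{k}}$ forward, while the $B$-demands enter near the far end and zig-zag back along the backward arcs $(c_j,c_{j-3})$, so an $A$-walk and a $B$-walk share all $\hat{k}$ central vertices with $\Theta(\hat{k})$ \emph{distinct} offsets $\lambda_k(W_h,c_z)-\lambda_k(W_i,c_z)$. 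Hence each cross pair forbids $\Theta(\hat{k})$ delays, the greedy delays satisfy $d_i=i-1+\lfloor(i-1)/8\rfloor(2\hat{k}-4)=\Theta(i\hat{k})$, and the total delay is $\Omega(k^3)$ against walks of total length $\Theta(k^2)$ and an optimum of $\Theta(k^2)$ via the bypass arcs, giving the ratio $\Omega(k)$. Without this blow-up of $|\mathsf{bad\_delays}_{i,h}|$ your argument cannot beat $\Omega(k/\log k)$.
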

\begin{proof}
By \cref{prop:very-easy-bound}, it suffices to construct a family of TDW instances on bounded-degree digraphs, defined over all \(l\in\N_{\geq2}\) with \(\hat{k}:=2^l, k:=2\hat{k}\), for which our algorithm achieves an approximation ratio of \(\Omega(k)\). Construct \(G_k\) by taking two \((2,l)\)-in-trees \(A_S\) and \(B_S\), and two \((2,l)\)-out-trees \(A_T\) and \(B_T\). Call their roots \(r_{A_S},r_{B_S},r_{A_T}\), and \(r_{B_T}\), respectively. Then, add a ``central path'' \(C\) consisting of vertices \(\{c_1,c_2,\dots,c_{\hat{k}}\}\), ``forward'' arcs \(\{(c_i,c_{i+1})\mid i\in[\hat{k}-1]\}\), and ``backward'' arcs \(\{(c_j,c_{j-3})\mid j\in[4,\hat{k}], j\bmod2=0\}\). Attach the directed trees to the central path with arcs \(\{(r_{A_S},c_1),(r_{B_S},c_{\hat{k}-1}),(c_{\hat{k}},r_{A_T}),(c_2,r_{B_T})\}\). Next, pair each leaf (source) in \(A_S\) with an arbitrary, but unique, leaf (destination) in \(A_T\). Do the same for \(B_S\) and \(B_T\). For each such pair, add an arc from the source to destination (called a ``bypass arc''), and add a demand to \(\mathcal{T}_k\). Finally, let \(\lambda_k\) assign length \(2\hat{k}+2l-1\) to each bypass arc, length \(\hat{k}-1\) to arcs \((r_{B_S},c_{\hat{k}-1})\) and \((c_{\hat{k}},r_{A_T})\), and length \(1\) to all other arcs. We refer the reader to \cref{fig:master-figure}, part (iii).

Observe that for each demand, there exist \emph{two} shortest demand-satisfying paths, each of length \(2\hat{k}+2l-1\). In particular, observe that a demand between leaves of \(A_S\) and \(A_T\) may be satisfied by a bypass arc, or by a path that travels from the source in \(A_S\), towards the root of \(A_S\), onto the central path vertex \(c_1\), along all forward arcs of \(C\), onto the root of \(A_T\), and towards the destination in \(A_T\). Similarly, a demand between leaves of \(B_S\) and \(B_T\) may be satisfied by a bypass arc, or by a path that travels from the source in \(B_S\), towards the root of \(B_S\), onto the central vertex \(c_{\hat{k}-1}\), across \(C\) by alternating between forward and backward arcs (until arriving at \(c_2\)), onto the root of \(B_T\), and towards the destination in \(B_T\). We call the paths that do not use the bypass arcs the ``meandering paths.''

Because our algorithm specifies no tie-breaking scheme, we may assume that it satisfies demands using the meandering paths, and that it alternates between assigning delays to demands from \(A_S\) and assigning delays to demands from \(B_S\) every \emph{four} iterations. In other words, out of the \(2\hat{k}\) demands created above and fed as input to our algorithm, we may assume that those from \(A_S\) to \(A_T\) are labeled with indices \(I_A:=\{i\in[2\hat{k}]\mid \lfloor(i-1)/4\rfloor\equiv 0 \pmod 2\}\), while those from \(B_S\) to \(B_T\) are labeled with \(I_B:=\{i\in[2\hat{k}]\mid \lfloor(i-1)/4\rfloor\equiv 1 \pmod 2\}\).

To understand the suboptimality of this situation, we make several observations that help us determine the values our algorithm assigns to each \(d_i\). First, note that for each \(i\in[2\hat{k}]\), \(z\in[\hat{k}]\), the length of walk \(W_i\) up to vertex \(c_z\) on the central path is:
\begin{align*}
\lambda_k(W_i,c_z) =
\begin{cases}
l+z,&\textbf{if }i\in I_A\\
l+2\hat{k}-z-2\cdot(z\bmod 2),&\textbf{if }i\in I_B
\end{cases}
\end{align*}
Using this, we see that the details of \cref{algs:clever} give us the following relation, which is defined over \(i\in[k],h\in[i-1]\):
\begin{align*}
\mathsf{bad\_delays}_{i,h} =
\begin{cases}
\{d_h\},&\textbf{if }i,h\in I_A\textbf{ or}\\
&\textbf{if }i,h\in I_B\\
\{d_h-2\hat{k}+2z+2\cdot(z\bmod 2)\mid z\in[\hat{k}]\},&\textbf{if }i\in I_B,h\in I_A\\
\{d_h+2\hat{k}-2z-2\cdot(z\bmod2)\mid z\in[\hat{k}]\},&\textbf{if }i\in I_A,h\in I_B
\end{cases}
\end{align*}
Because our algorithm defines \(\mathsf{bad\_delays}_i:=\bigcup_{h\in[i-1]}\mathsf{bad\_delays}_{i,h}\) and \(d_i:=\min(\mathbb{Z}_{\geq0}\setminus\mathsf{bad\_delays}_i)\), observe that the above relation is in fact a \emph{recurrence} relation. As such, after noting that \(d_1=0\), it is straightforward to use the above relation to show by induction that for all \(i\in[2\hat{k}]\),
\begin{align*}
d_i = i-1 + \lfloor\frac{i-1}{8}\rfloor(2\hat{k}-4).
\end{align*}
Thus, our algorithm incurs a total delay of \(\sum_{i\in[2\hat{k}]}(i-1+\lfloor(i-1)/8\rfloor(2\hat{k}-4))=\Omega(\hat{k}^3)=\Omega(k^3)\) and total walk length of \(2\hat{k}\cdot(2\hat{k}+2l-1))=\Theta(\hat{k}^2)=\Theta(k^2)\). Had the algorithm opted to use the bypass arcs, no delay would have been required, and the total walk length would have been the same. Thus, our algorithm achieves an approximation ratio of \(\Omega(k)\).
\end{proof}
\begin{remark}
It is straightforward to slightly tweak the above analysis to show that for min-max Time Disjoint Walks, \cref{algs:clever} has an approximation ratio of \(\Theta(k)\) on bounded-degree digraphs.
\end{remark}

\section{Conclusions}\label{section:conclusions}
In this paper, we introduce Time Disjoint Walks, a new variant of (shortest) Disjoint Paths that similarly seeks to connect \(k\) demands in a network, but relaxes the disjointness constraint by permitting vertices to be shared across multiple walks, as long as no two walks arrive at the same vertex at the same time. We present and analyze a poly-time exact algorithm for both min-sum and min-max TDW on oriented stars, and show that min-sum TDW becomes NP-hard on a slightly more expressive class of inputs. We show that min-sum Time Disjoint Walks is APX-hard, even for DAGs of max degree three. On the other hand, we provide a natural \(\Theta(k/\log k)\)-approximation algorithm for directed acyclic graphs of bounded degree. Interestingly, we also show that for general digraphs with just one of these two properties, the approximation ratio of our algorithm is bumped up to \(\Theta(k)\).

An interesting future work is to tighten the gap between these inapproximability and approximability results for TDW on bounded-degree DAGs. We conjecture that our approximation algorithm is almost optimal, but that our hardness of approximation result can be strengthened to nearly match our algorithm's approximation ratio of \(\Theta(k/\log k)\). This belief is based on the observation that TDW is a complex problem that involves both routing \emph{and} scheduling, and many problems of the latter variety (of size \(n\)) are NP-hard to approximate within a factor of \(n^{1-\epsilon}\), for any \(\epsilon>0\)~\cite{zuck}. One may also wish to explore similar complexity questions for the many variants of Time Disjoint Walks discussed in \cref{section:TDW}.


\bibliographystyle{alpha}
\bibliography{references}

\end{document}